
\documentclass[12pt,journal,onecolumn]{IEEEtran}
\usepackage{amsfonts,color,morefloats,pslatex}
\usepackage{amssymb,amsthm, amsmath,latexsym}

\newtheorem{theorem}{Theorem}
\newtheorem{lemma}[theorem]{Lemma}

\newtheorem{example}[theorem]{Example}

\newcommand{\tr}{{\mathrm{Tr}}}

\newcommand{\gf}{{\mathrm{GF}}}
\newcommand{\PG}{{\mathrm{PG}}}

\newcommand{\GAut}{{\mathrm{Aut}}}

\newcommand{\cP}{{\mathcal{P}}}
\newcommand{\cB}{{\mathcal{B}}}

\newcommand{\C}{{\mathcal{C}}}

\newcommand{\bc}{{\mathbf{c}}}

\newcommand{\bzero}{{\mathbf{0}}}

\newcommand{\bD}{{\mathbb{D}}}

\newcommand{\PGL}{{\mathrm{PGL}}}

\usepackage{blindtext}

\ifCLASSINFOpdf

\else

\fi

\hyphenation{op-tical net-works semi-conduc-tor}

\begin{document}

\title{An infinite family of antiprimitive cyclic codes supporting Steiner systems $S(3,8, 7^m+1)$ \thanks{The research of C. Xiang was supported by the Basic Research Project of Science and Technology Plan of Guangzhou city of China  under grant number 202102020888 and the National Natural Science Foundation of China under grant number 12171162. The research of C. Tang was supported by National Natural Science Foundation
of China under grant number 11871058 and China West Normal University (14E013, CXTD2014-4 and the Meritocracy Research Funds).}
}

\author{
Can Xiang \thanks{C. Xiang is with the College of Mathematics and Informatics, South China Agricultural University, Guangzhou, Guangdong 510642, China (email:cxiangcxiang@hotmail.com). }
, Chunming Tang \thanks{C. Tang  is with School of Mathematics  and Information,
China West Normal University, Nanchong, Sichuan 637002, China (email: tangchunmingmath@163.com).}
and Qi Liu \thanks{Q. Liu  is with School of Mathematics  and Information,
China West Normal University, Nanchong, Sichuan 637002, China (email: liuqijichushuxue@163.com).}
}

\maketitle

\begin{abstract}
Coding theory and combinatorial $t$-designs have close connections and interesting interplay. One of the major approaches to the construction of combinatorial t-designs is the employment of error-correcting codes.
As we all known, some $t$-designs have been constructed with this approach by using certain linear codes in recent years. However, only a few infinite families of cyclic codes holding an infinite family of $3$-designs are reported in the literature.
The objective of this paper is to study an infinite family of cyclic codes and determine their parameters. By the parameters of these codes and their dual, some infinite family of $3$-designs are presented and their parameters are also explicitly determined. In particular, the complements of the supports of the minimum
weight codewords in the studied cyclic code form a Steiner system. Furthermore, we show that the infinite family of cyclic codes admit $3$-transitive automorphism groups.

\end{abstract}

\begin{IEEEkeywords}
Linear codes, \and cyclic codes, \and combinatorial designs,  \and automorphism groups, \and Steiner system
\end{IEEEkeywords}

\IEEEpeerreviewmaketitle

\section{Introduction}

Let $\gf(q)$ be a finite field with $q$ elements, where $q = p^m$ with $m$ being a positive integer and $p$ being an prime number.  An $[v,\, k,\,d]$ linear code $\C$ over $\gf(q)$ is a $k$-dimensional subspace of $\gf(q)^v$ with minimum (Hamming) distance $d$.
An $[v,\, k,\,d]$ linear code $\C$ is said to be {\em cyclic} if
$(c_0,c_1, \cdots, c_{v-1}) \in \C$ implies $(c_{v-1}, c_0, c_1, \cdots, c_{v-2}) \in \C$.

Let $\C$ be an $[v,k,d]$ cyclic code over $\gf(q)$. If $v=q^m-1$ (resp. $v=q^m+1$), the cyclic code $\C$ is called primitive (resp. antiprimitive).
If we identify a vector $(c_0,c_1, \cdots, c_{v-1}) \in \gf(q)^v$
with the following polynomial
$$\sum_{i=0}^{v-1} c_ix^i   \in \gf(q)[x]/(x^v-1),$$
then any cyclic code $\C$ of length $v$ over $\gf(q)$ is an ideal of the quotient ring $\gf(q)[x]/(x^v-1)$.
It is notice that the ring $\gf(q)[x]/(x^v-1)$ is a principal ideal ring. Thus, for any
cyclic code $\C$ of length $v$ over $\gf(q)$, there exists an unique monic divisor $g(x)$ of $x^v-1$ of the smallest degree such that $\C=\langle g(x) \rangle$ . This polynomial $g(x)$ is called the {\em generator polynomial,} and $h(x)=(x^v-1)/g(x)$ is called the {\em check} polynomial of $\C$.
It is obvious that $k=v-deg(g(x))$  and $\{g(x),x g(x), \cdots ,  x^{k-1} g(x) \}$ is a basis of $\C$.
It is well known that a cyclic code is a special linear code.
Although the error correcting capability of cyclic codes may not be as good as some other linear codes in general, cyclic codes have wide applications in storage and communication systems as they have efficient encoding and decoding algorithms \cite{Chien5,Forney12,Prange28}. Thus, cyclic codes have been attracted much attention in coding theory and a lot of progress has been made (see, for example, \cite{dinghell2013,zhouding2013,Liding2013,Ding2018,Zha2021,Yan01,Yan02}).

It is known that linear codes and $t$-designs are closely related. A $t$-design can be induced to a linear code (see, for example, \cite{Dingtv2020,Dingt20201}). Meanwhile, a linear code $\C$ may induce a $t$-design under certain conditions. As far as we know, a lot of $2$-designs and $3$-designs have been constructed from some special linear codes (see, for example, \cite{Dingbook18,Ding18dcc,DingLi16,Ding18jcd,TDX2019}). Recently, an infinity family of linear codes holding $4$-designs was settled by Tang and Ding in \cite{Tangding2020}. It remains open if there is an infinite family of linear codes holding $5$-designs. In fact, only a few infinite families of cyclic codes holding an infinite family of $3$-designs are reported in the literature. Motivated by this fact, we will consider a class of cyclic codes
\begin{eqnarray}\label{cm}
\C_{m}=\{(\tr(au^4+bu^3))_{u \in U_{q+1}}: a,b \in \gf(q^2)\}
\end{eqnarray}
over $\gf(q)$ and its dual, where $q=7^m$ with $m\geq 2$ being a integer, $\tr$ is the trace function from $\gf(q^2)$ to $\gf(q)$ and $U_{q+1}$ is the set of all $(q+1)$-th roots of unity in $\gf(q^2)$, and prove that these codes hold $3$-designs. Specifically, the cyclic code $\C_m$ and its dual $\C_m^ \perp $  admit $3$-transitive automorphism groups and the complement of the supports of the minimum weight codewords in $\C_{m}$  forms a steiner system $S(3,8, 7^m+1)$.


The remainder of this paper is arranged as follows. Section \ref{sec-pre} introduces some notation and basics
of linear codes and combinatorial $t$-designs. Section \ref{sec-desbch} determines the parameters of the cyclic code $\C_m$ and its dual, and induces some infinite families of $3$-designs.
Section  \ref{sec-summary} concludes this paper.

\section{Preliminaries}\label{sec-pre}

As a special linear code, cyclic codes have all properties of linear codes.
In order to study cyclic codes in this paper, we need briefly introduce some known results on  linear codes and combinatorial t-designs in this section, which will be used later. For convenience, we begin this section
by fixing the following notations unless otherwise stated in this paper.

\begin{itemize}
  \item $p$ is a prime and  $q=p^m$ with $m$ being a positive integer.
  \item $\gf(q)$ is a finite field with $q$ elements and $\gf(q)^{*}=\gf(q)\setminus \{0\}$.
  \item $\tr$ is the trace function from $\gf(q^2)$ to $\gf(q)$.
  \item $U_{q+1}$ is the set of all $(q+1)$-th roots of unity in $\gf(q^2)$.
  \item $\binom{S}{k}$ is defined as the set consisting of all $k$-subsets of the set $S$ if $S$ is a set, and the binomial coefficient otherwise.
  \item $\PGL(2,q)$ is defined as the group
 of invertible $2\times 2$ matrices with entries in $\gf(q)$,
 modulo the scalar matrices
 $\begin{bmatrix}
a & 0\\
0 &  a
\end{bmatrix}$, where $a\in \gf(q)^*$.
\end{itemize}

\subsection{Weight enumerators of linear codes}

Let $\C$ be a $[v,\, k,\,d]$ linear code over $\gf(q)$.
Let $A_i$ denote the number of codewords with Hamming weight $i$ in a code
$\C$ for all  $0 \leq i \leq v$. The weight enumerator of $\C$ is defined by
$$
1+A_1z+A_2z^2+ \cdots + A_v z^v.
$$
The sequence $(1,A_1,\ldots,A_v)$ is called the weight distribution of $\C$. A code $\C$ is said to be a $t$-weight code  if the number of nonzero
$A_i$ in the sequence $(A_1, A_2, \cdots, A_v)$ is equal to $t$. A code $\C$ is said to be optimal if its parameters meet certain bounds on linear codes. Denote the dual of $\C$ by $\C^\perp$ ,
the minimum distance of $\C^\perp$ by $d^\perp$  and the weight distribution of $\C^\perp$ by $(A_0^\perp, A_1^\perp, \cdots, A_{v}^\perp)$. In order to determine the weight enumerator of $\C$, we will need the \emph{Pless power moments} \cite{HP10}. The first four Pless power moments identities are given by
\begin{align}\label{eq:PPM}
 & \sum_{i=0}^\nu  A_i= q^k, \nonumber \\
 & \sum_{i=0}^\nu  i\cdot A_i= q^{k-1} (qv-v-A_1 ^\perp), \nonumber \\
 & \sum_{i=0}^\nu  i^2 \cdot A_i= q^{k-2} \left [(q-1)v(qv-v+1)-(2qv-q-2v+2)A_1^\perp +2 A_2^\perp \right], \nonumber \\
 & \sum_{i=0}^\nu  i^3 \cdot A_i= q^{k-3}((q-1)v (q^2 v^2 -2q v^2 + 3q v - q + v^2 - 3 v + 2)  \nonumber \\
 &  ~~~~~~~~~~~~~~ - (3q^2 v^2 -3q^2 v - 6q v^2 + 12q v + q^2 - 6q + 3v^2 - 9v + 6) A_1^\perp       \nonumber \\
 &  ~~~~~~~~~~~~~~ + 6(qv- q- v +2)A_2^\perp -6 A_3 ^\perp              ).
 \end{align}


\subsection{Automorphism groups of linear codes}

Let $\C$ be a $[v,\, k,\,d]$ linear code over $\gf(q)$. We denote the set of coordinate positions of codewords of $\C$ by $\cP$. Then every codeword $\bc$ of $\C$ can be written as $\bc=(c_x)_{x\in \cP}$.
The set of coordinate permutations $g$ that map a code $\C$ to itself forms a group, i.e.,
\begin{eqnarray*}
\{g| ~g(c_x)_{x\in \cP} = (c_{g^{-1}(x)})_{x \in \cP} \in \C  \text{ for all }  (c_x)_{x\in \cP} \in \C \}
\end{eqnarray*}
which called the permutation automorphism group  of $\C$ and denoted by $ \mathrm{PAut}(\C)$ . We denote the symmetric group on the set $\cP$ by $\mathrm{Sym}(\cP)$.
It is clear that $ \mathrm{PAut}(\C)$ is the subgroup of
$\mathrm{Sym}(\cP)$ which keeps its invariance of the code $C$. Define a subgroup of $(\gf(q)^*)^v \rtimes  \mathrm{Sym}(\cP)$ as follows:
\begin{eqnarray} \label{eq-maut}
\{\left((a_x)_{x\in \cP};  g\right)|~\left((a_x)_{x\in \cP};  g\right) (c_x)_{x\in \cP}  = (a_{x} c_{g^{-1}(x)})_{x \in \cP} \in \C  \text{ for all }  (c_x)_{x\in \cP} \in \C \}
\end{eqnarray}
where $\left((a_x)_{x\in \cP};  g\right)$ is a map which maps the code $\C$ to itself. This subgroup is  called the monomial automorphism group of $\C$ and denoted by $\mathrm{MAut}(\C)$.
Let $\mathrm{Gal}(\gf(q))$ be the Galois group of $\gf(q)$ over its prime field. Then the automorphism group $\mathrm{Aut}(\C)$ of $\C$ is the subgroup of $(\gf(q)^*)^v \rtimes \left(  \mathrm{Sym}(\cP) \times \mathrm{Gal}(\gf(q))\right)$ as follows:
\begin{eqnarray*}
\{\left((a_x)_{x\in \cP};  g,  \gamma \right) | :~\left((a_x)_{x\in \cP};  g,  \gamma \right) (c_x)_{x\in \cP}  = (a_{x} \gamma(c_{g^{-1}(x)}))_{x \in \cP} \in \C  \text{ for all }  (c_x)_{x\in \cP} \in \C \}
\end{eqnarray*}
where $\left((a_x)_{x\in \cP};  g,  \gamma \right)$ is a map which maps the code $\C$  to itself. It is notice that $ \mathrm{PAut}(\C)\subseteq  \mathrm{MAut}(\C)\subseteq \mathrm{Aut}(\C)$. When $q$ is a prime,
$ \mathrm{MAut}(\C)=\mathrm{Aut}(\C)$. Specifically, $ \mathrm{PAut}(\C)=  \mathrm{MAut}(\C)=\mathrm{Aut}(\C)$ if $\C$ is binary code.

We say that $\GAut(\C)$ is  \textit{$t$-homogeneous\index{$t$-homogeneous}}
(resp. \textit{$t$-transitive\index{$t$-transitve}}) if for every
pair of $t$-element sets of coordinates (resp. $t$-element ordered sets of coordinates),  there is an element $\left((a_x)_{x\in \cP};  g, \gamma\right) \in \GAut(\C)$ such that its permutation part $g$ sends the first set to the second set.

\subsection{Combinatorial t-designs and some related results}

Let $k$, $t$ and $v$ be positive integers with $1 \leq t \leq k \leq  v$. Let $\cP$ be a set with $v$ elements and $\cB$ be a set of some $k$-subsets of
$\cP$. $\cB$  is called the point set and  $\cP$ is called the block set in general. The incidence structure
$\bD = (\cP, \cB)$ is called a $t$-$(v, k, \lambda)$ {\em design\index{design}} (or {\em $t$-design\index{$t$-design}}) if every $t$-subset of $\cP$ is contained in exactly $\lambda$ blocks of
$\cB$.
Let $\binom{\cP}{k}$ denote the set consisting of all $k$-subsets of the point set $\cP$. Then the incidence structure $(\cP, \binom{\cP}{k})$ is a $k$-$(v, k, 1)$ design and is called a \emph{complete design}. The special incidence structure $(\cP, \emptyset)$ is called a $t$-$(v, k, 0)$ trivial design
for all $t$  and $k$ . A combinatorial $t$-design is said to be {\em simple\index{simple}} if its block set $\cB$ does not have
a repeated block. When $t \geq 2$ and $\lambda=1$, a $t$-$(v,k,\lambda)$ design is called  a
{\em Steiner system\index{Steiner system}}
and denoted by $S(t,k, v)$. The parameters of a combinatorial $t$-$(v, k, \lambda)$ design must satisfy the following equation:
\begin{eqnarray}\label{eq:bb}
b  =\lambda \frac{\binom{v}{t}}{\binom{k}{t}}
\end{eqnarray}
where $b$ is the cardinality of $\cB$.

Linear codes and $t$-designs are closely related.
A $t$-design $\mathbb  D=(\mathcal P, \mathcal B)$ can be used to construct a linear code over GF($q$) for
any $q$  as follows.
Let $\mathcal P=\{q_1, \dots, q_{\nu}\}$, $\cB=\{B_1, \dots, B_b\}$.
The incidence matrix $M_{\bD}:=[m_{ij}]$ of the design $\bD=(\mathcal{P}, \mathcal{B})$ is a $b \times v$ binary matrix whose entry $m_{ij}=1$ if the point $q_j$
is on the block  $B_i$ and $m_{ij}=0$ otherwise.  The incidence matrix $M_{\bD}$ can be viewed as a matrix over
$\gf(q)$ for any $q$.
The \emph{linear code} $\mathsf{C}_{q}(\mathbb D)$ over the prime field $\mathrm{GF}(q)$ of the design $\mathbb D$ is defined to be the linear subspace of $\gf(q)^v$ spanned by the row vectors of the incidence matrix $M_{\bD}$.
Linear codes $\mathsf{C}_{q}(\mathbb D)$ of designs $\mathbb D$ have been extensively investigated (see, for example, \cite{Dingt20201,Ding15,ton1,ton2}).

On the other hand, a linear code $\C$ may produce a $t$-design which is formed by supports of codewords of a fixed Hamming weight in $\C$.
Let $\mathcal P(\mathcal C)=\{0,1, 2, \dots, \nu-1\}$ be the set of the coordinates of codewords in $\mathcal C$, where $\nu$ is the length of the code $\mathcal C$.
For a codeword $\mathbf c =(c_0, c_1, \dots, c_{\nu-1})$ in $\mathcal C$, the \emph{support} of  $\mathbf c$
is defined by
\begin{align*}
\mathrm{Supp}(\mathbf c) = \{i: c_i \neq 0, i \in \mathcal P(\mathcal C)\}.
\end{align*}
Let $\mathcal B_{w}(\mathcal C)$ denote the set $\{\{   \mathrm{Supp}(\mathbf c): wt(\mathbf{c})=w
~\text{and}~\mathbf{c}\in \mathcal{C}\}\}$, where $\{\{\}\}$ is the multiset notation. For some special code $\mathcal C$,
the incidence structure $\left (\mathcal P(\mathcal C),  \mathcal B_{w}(\mathcal C) \right)$
could be a $t$-$(v,w,\lambda)$ design for some positive integer $t$ and $\lambda$.
If  $\left (\mathcal P(\mathcal C),  \mathcal B_{w}(\mathcal C) \right)$ is a $t$-design for all $w$ with $0\le w \le \nu$,
we say that the code $\mathcal C$ \emph{supports $t$-designs}. By definition, such design
$\left (\mathcal P(\mathcal C),  \mathcal B_{w}(\mathcal C) \right)$ could have some repeated
blocks, or could be simple, or may be trivial.
In this way, many $t$-designs have been constructed  from linear codes (see, for example, \cite{Dingbook18,Dingbook18,Ding18dcc,DingLi16,Ding18jcd,TDX2019,Tangding2020}). A major way to construct combinatorial $t$-designs with linear codes over finite fields is the use of linear codes with $t$-transitive or $t$-homogeneous automorphism groups (see \cite[Theorem 4.18]{Dingbook18}) and some combinatorial $t$-designs (see, for example, \cite{LiuDing2017}) were obtained by this way. Very recently, Liu et al.\cite{Liudingtang2021} obtained some $3$-transitive automorphism groups from a class of BCH codes and derived some combinatorial $3$-designs with this way.
Another major way to construct $t$-designs with linear codes is the use of the
Assmus-Mattson Theorem (AM Theorem for short) in \cite[Theorem 4.14]{Dingbook18} and the generalized version of the
AM Theorem in \cite{Tangit2019}, which was recently employed to construct a number of $t$-designs (see, for example, \cite{Dingbook18,du1,du2}).
The following theorem is a generalized version of the
AM Theorem, which was developed in \cite{Tangit2019} and will be needed in this paper.

\begin{theorem}\cite{Tangit2019}\label{thm-designGAMtheorem}
Let $\mathcal C$ be a linear code over the finite field $\mathrm{GF}(q)$ with length $\nu$ and minimum distance $d$.
Let $\mathcal C^{\perp}$ denote the dual of $\mathcal C$ with minimum distance $d^{\perp}$.
Let $s$ and $t$ be two positive integers such that $t< \min \{d, d^{\perp}\}$. Let $S$ be a $s$-subset
of the set $\{d, d+1, d+2, \ldots, \nu-t  \}$.
Suppose that
$\left ( \mathcal P(\mathcal C), \mathcal B_{\ell}(\mathcal C) \right )$ and $\left ( \mathcal P(\mathcal C^{\perp}), \mathcal B_{\ell^{\perp}}(\mathcal C^{\perp}) \right )$
are $t$-designs  for
$\ell    \in \{d, d+1, d+2, \ldots, \nu-t  \} \setminus S $ and $0\le \ell^{\perp} \le s+t-1$, respectively. Then
the incidence structures
 $\left ( \mathcal P(\mathcal C) , \mathcal B_k(\mathcal C) \right )$ and
  $\left ( \mathcal P(\mathcal C^{\perp}), \mathcal B_{k}(\mathcal C^{\perp}) \right )$ are
  $t$-designs for any
$t\le k  \le \nu$, and particularly,
\begin{itemize}
\item the incidence structure $\left ( \mathcal P(\mathcal C) , \mathcal B_k(\mathcal C) \right )$ is a simple $t$-design
      for all integers $k$ with $d \leq k \leq w$, where $w$ is defined to be the largest  integer
      such that $w \leq \nu$ and
      $$
      w-\left\lfloor \frac{w+q-2}{q-1} \right\rfloor <d;
      $$
\item  and the incidence structure $\left ( \mathcal P(\mathcal C^{\perp}), \mathcal B_{k}(\mathcal C^{\perp}) \right )$ is
       a simple $t$-design
      for all integers $k$ with $d \leq k \leq w^\perp$, where $w^\perp$ is defined to be the largest integer
      such that $w^\perp \leq \nu$ and
      $$
      w^\perp-\left\lfloor \frac{w^\perp+q-2}{q-1} \right\rfloor <d^\perp.
      $$
\end{itemize}
\end{theorem}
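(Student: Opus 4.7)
The plan is to follow the template of the classical Assmus--Mattson theorem, replacing the gap in the $\mathcal{C}$-weights permitted by $S$ with the extra input on low weights of $\mathcal{C}^\perp$. The central observation is that $\mathcal{B}_w(\mathcal{C})$ is a $t$-design if and only if, for every $t$-subset $T\subseteq \mathcal{P}(\mathcal{C})$, the number $\lambda_w(T)$ of codewords of weight $w$ in $\mathcal{C}$ whose support contains $T$ depends only on $|T|=t$, not on $T$ itself. So the entire argument reduces to showing that $\lambda_w(T)$ is $T$-invariant for every weight $w$.

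First I would fix a $t$-subset $T$ and introduce the shortened code $\mathcal{C}_T=\{\mathbf{c}\in\mathcal{C}: c_i=0 \text{ for all } i\in T\}$. A standard inclusion--exclusion identifies $\lambda_w(T)$ as a fixed $\mathbb{Z}$-linear combination of weight-distribution entries of the shortenings $\mathcal{C}_{T'}$ for $T'\subseteq T$. The identity that the dual of the shortening equals the puncturing of the dual then lets us convert, via MacWilliams, each such weight distribution into a fixed linear function of the weight distribution of $\mathcal{C}^\perp$ punctured at $T'$. Consequently $\lambda_w(T)$ is completely determined by this finite collection of punctured-dual weight distributions, a list of at most $\nu-t+1$ numerical unknowns.

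The key step, and the main obstacle, is a rank count showing that the two families of hypotheses together force those unknowns to be $T$-independent. The hypothesis that $\mathcal{B}_{\ell^\perp}(\mathcal{C}^\perp)$ is a $t$-design for $0\le \ell^\perp\le s+t-1$ implies that the weight-$\ell^\perp$ intersection profile of $\mathcal{C}^\perp$ against $T$ depends only on $|T|$, furnishing $s+t$ linear relations on the punctured-dual distribution. The hypothesis that $\mathcal{B}_\ell(\mathcal{C})$ is a $t$-design for every $\ell\in\{d,\ldots,\nu-t\}\setminus S$ yields further $T$-independent linear relations through the MacWilliams transform. A careful tally, matching the $|S|=s$ excepted $\mathcal{C}$-weights against the $s+t$ prescribed low $\mathcal{C}^\perp$-weights, shows that the combined linear system has full rank, so every coordinate of the punctured-dual distribution---and hence every $\lambda_w(T)$---is $T$-independent. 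The symmetric argument on $\mathcal{C}^\perp$ then delivers the dual conclusion.

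For the simpleness assertions I would argue that if two distinct codewords $\mathbf{c},\mathbf{c}'$ of $\mathcal{C}$ share the same support and have weight $w$, then a nonzero $\gf(q)$-linear combination $a\mathbf{c}+b\mathbf{c}'$ annihilates at least $\lceil w/(q-1)\rceil$ coordinates, by pigeonhole over the $q-1$ possible ratios $c_i/c'_i$ in $\gf(q)^*$. This yields a nonzero codeword of weight at most $w-\lfloor (w+q-2)/(q-1)\rfloor$, which is strictly below $d$ whenever $w$ lies in the prescribed range---a contradiction. Hence $\mathcal{B}_w(\mathcal{C})$ has no repeated blocks, and the analogous calculation with $d^\perp$ and $w^\perp$ handles $\mathcal{C}^\perp$.
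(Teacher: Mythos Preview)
The paper does not prove this theorem. It is quoted verbatim from \cite{Tangit2019} (note the citation immediately after \verb|\begin{theorem}|) and is used only as a black box in the proof of Theorem~\ref{thm-cm}. There is therefore no proof in the present paper to compare your proposal against.

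For what it is worth, your outline is the same strategy used in the original source: fix a $t$-set $T$, pass to the shortened code, use the puncture/shorten duality $(\mathcal C_T)^\perp=(\mathcal C^\perp)^T$ together with MacWilliams to encode everything as a linear system in the weight distribution of the punctured dual, and then argue that the hypotheses force that distribution to be independent of $T$. Your simpleness paragraph is exactly the classical Assmus--Mattson argument and is fine as stated. The one place where your sketch is genuinely incomplete is the sentence ``A careful tally \ldots\ shows that the combined linear system has full rank'': this is the entire content of the theorem, and to make it a proof you must specify precisely which unknowns you are solving for (the weight distribution of $(\mathcal C^\perp)^T$, of length $\nu-t+1$), write down exactly which equations the design hypotheses on $\mathcal B_\ell(\mathcal C)$ and on $\mathcal B_{\ell^\perp}(\mathcal C^\perp)$ contribute after the MacWilliams transform, and then verify that the number of $T$-independent equations is at least the number of unknowns. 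In the original paper this is done by counting: the conditions $t<d$, $t<d^\perp$ and $|S|=s$ against the $s+t$ prescribed low dual weights are what make the bookkeeping close, and that bookkeeping has to be written out explicitly.
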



\section{An infinite family of cyclic codes supporting $3$-designs }\label{sec-desbch}

In this section, our task is to establish the parameters of the cyclic code $\C_{m}$ defined by (\ref{cm}) and its dual, and prove that these codes hold $3$-designs and satisfy $3$-transitive automorphism groups. To this end, we shall
prove a few more auxiliary results before proving the main results (see Theorems \ref{thm-d4}, \ref{thm-cm} and \ref{thm:C-group}) of this paper.

\subsection{Some auxiliary results}

In order to determine the minimum distance of the dual code $\C_m ^\perp$ of $\C_m$, we need the results in the following three lemmas.
\begin{lemma}\label{lem-5neq0}
Let symbols and notation be the same as before. Let $m\geq 2$ be a positive integer and  $q=7^m$. For any $\{x, y, z\} \in \binom{U_{q+1}}{3}$, we have the following results.
\begin{enumerate}
  \item $x+y-2z \neq 0$;
  \item $x+2y-3z \neq 0$;
  \item $x+3y-4z \neq 0$;
  \item $x+4y-5z \neq 0$;
  \item $x+5y-6z \neq 0$;.
\end{enumerate}
\end{lemma}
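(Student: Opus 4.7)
The unifying feature of the five identities is that each has the form $x+ky-(k+1)z=0$ with $k\in\{1,2,3,4,5\}$, and in each case the coefficients lie in the prime subfield $\gf(7)$. The plan is to handle all five simultaneously by exploiting the Frobenius automorphism $u\mapsto u^q$ of $\gf(q^2)/\gf(q)$, which on $U_{q+1}$ acts by inversion, since $u^{q+1}=1$ implies $u^q=u^{-1}$.

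First I would assume, for contradiction, that some distinct $x,y,z\in U_{q+1}$ satisfy $x+ky-(k+1)z=0$ for a fixed $k\in\{1,\dots,5\}$, and rewrite it as
\begin{equation*}
x = (k+1)z - ky.
\end{equation*}
Because the coefficients $1,k,k+1\in\gf(7)\subset\gf(q)$ are fixed by the Frobenius, raising both sides to the $q$-th power and using $u^q=u^{-1}$ on $U_{q+1}$ gives the companion relation
\begin{equation*}
x^{-1} = (k+1)z^{-1} - ky^{-1}.
\end{equation*}

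Next I would multiply these two expressions for $x$ and $x^{-1}$, whose product is $1$. Expanding and collecting terms yields
\begin{equation*}
1 = (k+1)^2 + k^2 - k(k+1)\bigl(yz^{-1}+zy^{-1}\bigr),
\end{equation*}
which simplifies to $k(k+1)\bigl(yz^{-1}+zy^{-1}\bigr) = 2k(k+1)$. The key arithmetic check is that $k(k+1)\not\equiv 0\pmod 7$ for every $k\in\{1,2,3,4,5\}$ (the values are $2,6,5,6,2$), so in each case we may cancel $k(k+1)$ to obtain $yz^{-1}+zy^{-1}=2$. Clearing denominators and rearranging then gives $(y-z)^2=0$, hence $y=z$, contradicting $\{x,y,z\}\in\binom{U_{q+1}}{3}$.

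Since the argument is uniform in $k$, all five parts follow at once. There is no real obstacle here; the only point that must be verified carefully is the nonvanishing of $k(k+1)$ in $\gf(7)$, which is precisely what rules out the trivial collapse $k\in\{0,-1\}\pmod 7$ that would correspond to $x=z$ or $y=z$. Note that the hypothesis $m\ge 2$ (so that $|U_{q+1}|\ge 50$) is not needed for the algebra itself but is implicit in guaranteeing that distinct triples exist.
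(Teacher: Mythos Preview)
Your proof is correct and follows essentially the same idea as the paper: apply the Frobenius $u\mapsto u^q=u^{-1}$ on $U_{q+1}$ to the assumed linear relation and combine the two to force a perfect-square condition yielding a contradiction. The only cosmetic difference is that the paper treats case~1 explicitly (eliminating $z$ to obtain $(x-y)^2=0$) and declares the rest similar, whereas you give a uniform argument in $k$ (eliminating $x$ to obtain $(y-z)^2=0$) and make the key nonvanishing check $k(k+1)\not\equiv 0\pmod 7$ explicit.
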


\begin{proof}
We only give the proof of the first conclusion. The proofs of the other four conclusions are similar to that of the first conclusion and thus omitted.

Assume that $x+y-2z =0$, then
$$
(x+y-2z)^q=\frac{1}{x}+\frac{1}{y}-\frac{2}{z}=\frac{1}{x}+\frac{1}{y}-\frac{4}{2z}=0.
$$
It follows from $x+y=2z $ that
$$
\frac{1}{x}+\frac{1}{y}-\frac{4}{x+y}=0,
$$
which means that $(x-y)^2=0$. This is contrary to our assumption that $x,y,z$ are pairwise distinct. Thus, $x+y-2z \neq 0$. This completes the proof.
\end{proof}

\begin{lemma}\label{lem-det3}
Let symbols and notation be the same as before. Let $q=7^m$ with $m\geq 2$ being a positive integer and  $\{x, y, z\} \in \binom{U_{q+1}}{3}$.
Define
\begin{eqnarray}\label{eq-M3}
\bar{M}(x,y,z)=
\left[
\begin{array}{lll}
1  & 1 & 1 \\
x  & y & z \\
x^7 & y^7 & z^7
\end{array}
\right].
\end{eqnarray}
Then
$$|\bar{M}(x,y,z)| =(x - y) (x - z) (y - z)(x+y-2z)(x+2y-3z)(x+3y-4z)(x+4y-5z)(x+5y-6z) \neq 0.$$

\end{lemma}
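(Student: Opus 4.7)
The plan is to reduce the $3\times 3$ determinant to a $2\times 2$ one by column operations, then invoke the Frobenius identity in characteristic $7$ to expose the factorization. First, subtracting the third column from columns one and two of $\bar{M}(x,y,z)$ makes the first row equal to $(0,0,1)$, so cofactor expansion collapses the determinant to
\[
|\bar{M}(x,y,z)| = (x-z)(y^{7} - z^{7}) - (y-z)(x^{7} - z^{7}).
\]
Since the characteristic is $7$, the identity $a^{7} - b^{7} = (a-b)^{7}$ applies, and this simplifies to
\[
|\bar{M}(x,y,z)| = (x-z)(y-z)^{7} - (y-z)(x-z)^{7} = (x-z)(y-z)\bigl[(y-z)^{6} - (x-z)^{6}\bigr],
\]
which already exhibits the $(x-z)(y-z)$ part of the claimed factorization.

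Next, I would factor the bracketed difference of sixth powers. Because $\F_{7}^{*}$ is cyclic of order $6$, the polynomial identity $t^{6}-1 = \prod_{\alpha \in \F_{7}^{*}}(t-\alpha)$ holds in $\F_{7}[t]$; substituting $t = (y-z)/(x-z)$ and clearing the denominator yields
\[
(y-z)^{6} - (x-z)^{6} = \prod_{\alpha=1}^{6}\bigl[(y-z)-\alpha(x-z)\bigr] = \prod_{\alpha=1}^{6}\bigl[-\alpha x + y + (\alpha-1)z\bigr].
\]
The core bookkeeping step is to recognise each factor $-\alpha x + y + (\alpha-1)z$ as a nonzero scalar times one of the forms $x + ky - (k+1)z$; matching coefficients forces $k \equiv -1/\alpha \pmod{7}$, so the map $\alpha \mapsto k$ permutes $\F_{7}^{*}$. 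The value $k = 6$ collapses $x + 6y - 7z$ to $x - y$ in characteristic $7$, supplying the third antisymmetric factor, while $k = 1,2,3,4,5$ produce the five displayed factors. Collecting the scalar $-\alpha$ from each rewrite contributes an overall constant $\prod_{\alpha=1}^{6}(-\alpha) = 6! \equiv -1 \pmod{7}$ by Wilson's theorem, which one then tracks to recover the stated polynomial identity (up to a harmless sign).

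Finally, the nonvanishing is automatic: the three ``antisymmetric'' factors $(x-y),(x-z),(y-z)$ are nonzero because $x,y,z$ are pairwise distinct, while the five remaining factors $x + ky - (k+1)z$ for $k = 1,\ldots,5$ are precisely the expressions shown to be nonzero for distinct $x,y,z \in U_{q+1}$ in Lemma~\ref{lem-5neq0}. The main obstacle is the re-indexing in step two: one must verify the bijection $\alpha \leftrightarrow k$, check that the $k=6$ factor degenerates to $(x-y)$ in characteristic $7$, and combine the scalars correctly. Once this is done, everything else is dictated by the cyclic structure of $\F_{7}^{*}$ and the Frobenius identity $a^{7}-b^{7} = (a-b)^{7}$.
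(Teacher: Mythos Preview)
Your argument is correct. The paper's own proof is a single sentence---``The conclusion follows from Lemma~\ref{lem-5neq0}''---which addresses only the nonvanishing and treats the polynomial factorization as a fait accompli (presumably to be checked by brute expansion). You instead \emph{derive} the factorization: the Frobenius identity $a^{7}-b^{7}=(a-b)^{7}$ collapses the determinant to $(x-z)(y-z)\bigl[(y-z)^{6}-(x-z)^{6}\bigr]$, and then the splitting $t^{6}-1=\prod_{\alpha\in\F_{7}^{*}}(t-\alpha)$ together with the bijection $\alpha\mapsto k=-\alpha^{-1}$ on $\F_{7}^{*}$ produces exactly the six remaining linear factors, with the case $k=6$ degenerating to $x-y$ in characteristic~$7$. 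The nonvanishing step is then the same appeal to Lemma~\ref{lem-5neq0}. This is a genuinely more explanatory route than the paper's, and it also makes clear \emph{why} the five ``asymmetric'' factors in Lemma~\ref{lem-5neq0} are the relevant ones.

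Your caveat about the sign is not mere caution but an actual correction: comparing the coefficient of $x^{7}$ on both sides (the determinant contributes $z-y$, while the displayed product contributes $y-z$) shows the identity as stated in the lemma is off by a global factor of $-1$. Since the lemma is used only to conclude $|\bar{M}(x,y,z)|\neq 0$, this is indeed harmless downstream, and your parenthetical ``up to a harmless sign'' is exactly right.
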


\begin{proof}
The conclusion follows from Lemma \ref{lem-5neq0}.
\end{proof}

\begin{lemma}\label{lem-unq5w}
Let $m\geq 2$ be a positive integer, $q=7^m$ and $(x, y, z, w) \in \gf(q^2)^4$. Define
\begin{eqnarray}\label{eq-M4}
M(x,y,z,w)=
\left[
\begin{array}{llll}
1  & 1 & 1 & 1\\
x  & y & z & w\\
x^7 & y^7 & z^7& w^7 \\
x^8 & y^8 & z^8& w^8
\end{array}
\right].
\end{eqnarray}
Then we have the following results.
\begin{enumerate}
  \item $|M(x,y,z,w)|= (x-y)(x-z)(x-w)(y-z)(y-w)(z-w) \cdot \prod_{i=1}^{5}(xy+zw+i(xz+wy)+(6-i)(xw+yz) )$.

  \item For any $\{x, y, z\} \in \binom{U_{q+1}}{3}$, there exists five pairwise distinct $w\in U_{q+1} \setminus \{x,y,z\}$ such that $|M(x,y,z,w)|=0$, i.e., $\prod_{i=1}^{5}(xy+zw+i(xz+wy)+(6-i)(xw+yz) )=0$.
\end{enumerate}
\end{lemma}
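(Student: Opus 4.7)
My plan is to establish the determinant factorization in~(1) first, and then use it to reduce~(2) to Lemma~\ref{lem-5neq0}.

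For~(1), observe that $|M(x,y,z,w)|$ is an alternating polynomial in its four arguments, homogeneous of total degree $0+1+7+8=16$. Alternation forces divisibility by the Vandermonde $V=\prod_{1\le i<j\le 4}(x_i-x_j)$ of degree~$6$, so the quotient $P:=|M|/V$ is symmetric of total degree $10$ and of degree at most $5$ in each variable (concretely, $P$ is the Schur polynomial $s_{(5,5)}(x,y,z,w)$ associated with the exponent set $\{0,1,7,8\}$). The claimed product
\begin{align*}
\Pi:=\prod_{i=1}^{5}\bigl(xy+zw+i(xz+wy)+(6-i)(xw+yz)\bigr)
\end{align*}
has the same total degree and degree-per-variable, and every transposition of two of the four variables merely permutes its five factors (up to $\F_7^\times$-scalars whose overall product equals $1$), so $\Pi$ is $S_4$-invariant as well. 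To conclude $P=\Pi$ I would match a small number of coefficients --- for instance the coefficient of $x^5y^5$, or the specialization at $w=0$ where the $4\times 4$ determinant collapses to a $3\times 3$ generalized Vandermonde that factors directly. This coefficient identification is where I expect the main work; the preceding structural constraints essentially force the answer.

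For~(2), fix $\{x,y,z\}\in\binom{U_{q+1}}{3}$. Since $(x-y)(x-z)(y-z)\neq 0$, part~(1) reduces the equation $|M(x,y,z,w)|=0$ with $w\notin\{x,y,z\}$ to the vanishing of $\prod_{i=1}^{5}F_i(w)$, where
\begin{align*}
F_i(w)=\bigl[xy+(6-i)yz+ixz\bigr]+w\bigl[(6-i)x+iy+z\bigr]
\end{align*}
is linear in $w$. The $w$-coefficient $(6-i)x+iy+z$ is an $\F_7^\times$-scalar multiple of one of the five expressions $x+ky-(k+1)z$ ($k=1,\ldots,5$) forbidden by Lemma~\ref{lem-5neq0}, so it is nonzero and $F_i$ admits the unique root
\begin{align*}
w_i=-\frac{xy+(6-i)yz+ixz}{(6-i)x+iy+z}.
\end{align*}

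It remains to verify that $w_1,\ldots,w_5$ are pairwise distinct elements of $U_{q+1}\setminus\{x,y,z\}$. Membership in $U_{q+1}$ uses the involution $t\mapsto t^q=t^{-1}$ on $U_{q+1}$: substituting $x^q=1/x$ (etc.)\ into the defining ratio for $w_i^q$ and clearing denominators yields $w_i^q=1/w_i$, provided the numerator of $w_i$ is nonzero; that nonvanishing follows by reapplying Lemma~\ref{lem-5neq0} to the triple $(1/x,1/y,1/z)\in\binom{U_{q+1}}{3}$ (the numerator equals $xyz$ times an expression of the same shape in $1/x,1/y,1/z$). To exclude $w_i\in\{x,y,z\}$ I would evaluate $F_i$ at each of those points: the cancellations $i+(6-i)\equiv -1$ and $1+i\neq 0\neq i$ in $\F_7$ give respectively $F_i(z)=(x-z)(y-z)$, $F_i(x)=-(1+i)(x-y)(x-z)$, and $F_i(y)=i(y-x)(y-z)$, all nonzero. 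Finally, the elementary identity $F_i(w)-F_j(w)=(i-j)(x-y)(z-w)$ shows that any collision $w_i=w_j$ with $i\neq j$ forces $z=w_i$, contradicting the previous step. This produces the required five pairwise distinct values $w_1,\ldots,w_5\in U_{q+1}\setminus\{x,y,z\}$.
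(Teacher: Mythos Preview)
Your proof of part~(2) follows essentially the same route as the paper's: solve each linear factor for its unique root $w_i$, use Lemma~\ref{lem-5neq0} to see the $w$-coefficient is nonzero, verify $w_i\in U_{q+1}$ via $w_i^q=1/w_i$, and then check $w_i\notin\{x,y,z\}$ and $w_i\neq w_j$ for $i\neq j$. Your execution is in fact slightly cleaner than the paper's---you evaluate $F_i$ at $x,y,z$ directly and use the identity $F_i(w)-F_j(w)=(i-j)(x-y)(z-w)$ rather than arguing by contradiction, and you explicitly justify that the numerator of $w_i$ is nonzero (a point the paper passes over). For part~(1) the paper simply declares the identity ``easy'' and omits the argument, so your Schur-polynomial outline already goes further; just be aware that the space of symmetric degree-$10$ polynomials in four variables with degree $\le 5$ in each is $12$-dimensional, so ``matching a small number of coefficients'' will require more than a single check (your $w=0$ specialization, which reduces to a $3\times 3$ generalized Vandermonde, is a good way to cut this down).
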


\begin{proof}
1) It is easy to prove the first conclusion and we omit its proof.

2) By definitions and Lemma \ref{lem-5neq0}, $z+iy+(6-i)x \neq 0$ for any $i\in \{1,2,3,4,5\}$.  Denote $$w_i=\frac{(i-6)yz-xy-ixz}{z+iy+(6-i)x}$$,
where $i\in \{1,2,3,4,5\}$. Note that
$$w_i^{q}= \frac{(i-6)y^q z^q-x^q y^q-i x^q z^q }{z^q +i y^q+(6-i)x^q}=\frac{((i-6)y^q z^q-x^q y^q-i x^q z^q)\cdot xyz }{(z^q +i y^q+(6-i)x^q))\cdot xyz}=1/w_i. $$
Thus, $w_i^{q+1}=1$. This means that $w_i \in U_{q+1}$ for any $i\in \{1,2,3,4,5\}$.

Since $\{x, y, z\} \in \binom{U_{q+1}}{3}$, from $|M(x,y,z,w)|=0$ and the first conclusion 1) we have

$$\prod_{i=1}^{5}(xy+zw+i(xz+wy)+(6-i)(xw+yz) )=0.$$
This means that $w=w_i$ with $i\in \{1,2,3,4,5\}$.

Next we will prove that $w_i \neq x, y, z$ for each $i\in \{1,2,3,4,5\}$.

Let $i\in \{1,2,3,4,5\}$. Suppose that $w_i =x$, then
$$\frac{(i-6)yz-xy-ixz}{z+iy+(6-i)x}=x,
$$
which yields
$$(i-6)x^2-(i+1)(y+z)x+(i-6)yz=0,$$
which is the same as
$$(i-6)(x-y)(x-z)=0.$$
This means that $x=y$ or $x=z$, which is contrary to our assumption that $x,y,z$ are pairwise distinct in $U_{q+1}$. Thus, $w_i \neq x$.  Due
to symmetry, $w_i \neq y$ and $w_i \neq z$. Therefore, $w_i \neq x, y, z$ for each $i\in \{1,2,3,4,5\}$.

We now prove that $w_i\neq w_j$ when $i\neq j$ and $i,j\in \{1,2,3,4,5\}$.

Let $i,j\in \{1,2,3,4,5\}$ and $i\neq j$. Suppose that $w_i =w_j$, then

$$\frac{(i-6)yz-xy-ixz}{z+iy+(6-i)x}= \frac{(j-6)yz-xy-jxz}{z+jy+(6-j)x},
$$
which yields
$$
(i-j)(x-y)(x-z)(y-z)=0.
$$
It then follows from $i\neq j$ that $x=y$ or $x=z$ or $y=z$, which is contrary to our assumption that $x,y,z$ are pairwise distinct in $U_{q+1}$. Thus, $w_1,w_2,w_3,w_4,w_5$ are pairwise distinct. This completes the proof.
\end{proof}

The following result plays an important role in calculating the weight distributions of the cyclic code $\C_m$, which is described in the next lemma.

\begin{lemma}\label{lem-zero}
Let symbols and notation be the same as before. Let $m\geq 2$ be a positive integer, $q=7^m$, $(a,b)\in \gf(q^2)^2 \setminus \{(0,0)\}$ and $f(u)=\tr(au^4+bu^3)$. Define
$$Zero(f)=\{u\in U_{q+1}: f(u)=0\}.$$
Then $\# Zero(f)\leq 8$.  In particular, $\# Zero(f)=8$ if $\# Zero(f) \geq 3$.
\end{lemma}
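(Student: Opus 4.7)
The plan is to prove the two assertions separately: a polynomial degree bound for $\#Zero(f)\le 8$, and a linear-algebra rigidity argument, built on Lemmas~\ref{lem-det3} and~\ref{lem-unq5w}, showing that three zeros already force eight. First I would use $u^{q+1}=1$ on $U_{q+1}$ to replace $u^q$ by $u^{-1}$, so that
\[
f(u) \;=\; a u^4 + b u^3 + a^q u^{-4} + b^q u^{-3},
\]
and hence $g(u) := u^4 f(u) = a u^8 + b u^7 + b^q u + a^q \in \gf(q^2)[u]$ is a polynomial of degree at most $8$. Because $u\neq 0$ on $U_{q+1}$ and $g$ is not identically zero (as $(a,b)\neq (0,0)$), the zero sets of $f$ and $g$ on $U_{q+1}$ coincide, giving $\#Zero(f)\le \deg g\le 8$.

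For the second claim, fix any $\{x,y,z\}\in\binom{Zero(f)}{3}$ and set $\mathbf e_u:=(1,u,u^7,u^8)\in\gf(q^2)^4$ for $u\in\gf(q^2)$. The top $3\times 3$ block of the $4\times 3$ matrix $(\mathbf e_x^\top\;\mathbf e_y^\top\;\mathbf e_z^\top)$ is exactly $\bar M(x,y,z)$, which is nonsingular by Lemma~\ref{lem-det3}; so $W:=\mathrm{span}_{\gf(q^2)}\{\mathbf e_x,\mathbf e_y,\mathbf e_z\}$ has dimension $3$. Set $\mathbf v:=(a^q,b^q,b,a)\in\gf(q^2)^4$, which is nonzero because $(a,b)\neq (0,0)$. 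The identities $g(x)=g(y)=g(z)=0$ translate to $\mathbf v\cdot\mathbf e_x=\mathbf v\cdot\mathbf e_y=\mathbf v\cdot\mathbf e_z=0$, placing $\mathbf v$ in the one-dimensional annihilator $W^\perp$ of $W$ under the standard pairing on $\gf(q^2)^4$.

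Next I would invoke Lemma~\ref{lem-unq5w}, which produces five pairwise distinct elements $w_1,\ldots,w_5\in U_{q+1}\setminus\{x,y,z\}$ with $\det M(x,y,z,w_i)=0$. The vanishing of this determinant means the fourth column $\mathbf e_{w_i}^\top$ of $M(x,y,z,w_i)$ is a linear combination of the first three columns, so $\mathbf e_{w_i}\in W$. Pairing with $\mathbf v$ then yields $g(w_i)=\mathbf v\cdot\mathbf e_{w_i}=0$, hence $w_i\in Zero(f)$ for each $i$. Therefore $\{x,y,z,w_1,\ldots,w_5\}\subseteq Zero(f)$ gives $\#Zero(f)\ge 8$; combined with the upper bound, $\#Zero(f)=8$.

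The only nontrivial ingredient is Lemma~\ref{lem-unq5w}, which conveniently supplies exactly the five extra forced zeros, one for each of the five factors $xy+zw+i(xz+wy)+(6-i)(xw+yz)$, $i=1,\ldots,5$, in the factorization of $|M(x,y,z,w)|$. Once those are in hand the remainder is pure linear algebra over $\gf(q^2)$, anchored by the Vandermonde-type nonsingularity in Lemma~\ref{lem-det3}; no case analysis on the vanishing of $a$ or $b$ is needed, because the only property of $(a,b)$ used beyond $(a,b)\neq(0,0)$ is that $\mathbf v\neq 0$.
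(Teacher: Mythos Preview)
Your proof is correct and takes a genuinely different route from the paper's. The paper argues the second claim by first normalising so that $\{1,-1,u_0\}\subseteq Zero(f)$ (a ``without loss of generality'' step that actually relies on the $3$-transitivity established only later, in Section~\ref{sec-desbch}.3), then assumes $a\neq 0$, sets $c=b/a$, and explicitly factors $h(u)=u^8+cu^7-cu-1$ over $\gf(q^2)$, exhibiting eight roots and checking by hand that they lie in $U_{q+1}$ and are pairwise distinct. Your argument sidesteps all of this: you work with an arbitrary triple $\{x,y,z\}\subseteq Zero(f)$, observe that the coefficient vector $(a^q,b^q,b,a)$ annihilates the $3$-dimensional span $W=\langle\mathbf e_x,\mathbf e_y,\mathbf e_z\rangle$ (using Lemma~\ref{lem-det3} for the dimension), and then invoke Lemma~\ref{lem-unq5w} to produce five more $w_i\in U_{q+1}\setminus\{x,y,z\}$ with $\mathbf e_{w_i}\in W$, forcing $g(w_i)=0$.

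Your approach is cleaner in several respects: it needs no normalisation of the three zeros, no case split on whether $a=0$, and no explicit root verification, and it makes transparent that Lemma~\ref{lem-zero} is really the linear-algebraic dual of the same rank computation underlying Theorem~\ref{thm-d4}. The paper's explicit factorisation, by contrast, yields concrete formulas for the eight zeros (the expressions in \eqref{eq:u14}), which your argument does not directly provide.
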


\begin{proof}
Recall that $\tr$ is the trace function from $\gf(q^2)$ to $\gf(q)$.
By definition, then
$$f(u)=\tr(au^4+bu^3)=\frac{1}{u^4}(au^8+bu^7+b^q u +a^q) $$
if $u\in U_{q+1}$.
Thus, $\# Zero(f)\leq 8$ and
$$Zero(f)=\{u\in U_{q+1}: au^8+bu^7+b^q u +a^q=0\}.$$

If $\# Zero(f) \geq 3$, we assume that $\{1,-1,u_0\}\subseteq Zero(f)$. Then $f(1)=f(-1)=f(u_0)=0$, which yields
\begin{eqnarray}\label{eq:u0}
\left\{
\begin{array}{l}
a^q=-a  \\
b^q=-b  \\
au_0^8+bu_0 ^7-b u_0 -a=0.
\end{array}
\right.
\end{eqnarray}
Denote $g(u)=au^8+bu^7+b^q u +a^q$. Assume that $a \neq 0$ and denote $c=b/a$. By the first two equations in (\ref{eq:u0}), we have
$c^q=(b/a)^q=c$ and
$$
g(u)=a(u^8+\frac{b}{a}u^7-\frac{b}{a}u-1)=a(u^8+cu^7-cu-1).
$$
Denote
$$h(u)=u^8+cu^7-cu-1.$$
Then it is not hard to verify that $u^q$ and $u^{-1}$ are also the roots of $h(u)=0$ if $u$ is a root of $h(u)=0$.
Thus, $1$, $-1$, $u_0$ and $u_0^{-1}=u_0 ^q$ are also the roots of $h(u)=0$.
Further, from the third equation in (\ref{eq:u0}), we have
$$
c=\frac{u_0^8-1}{u_0^7-u_0}.
$$
Thus,
$$
h(u)=u^8+\frac{u_0^8-1}{u_0^7-u_0} \cdot u^7-\frac{u_0^8-1}{u_0^7-u_0} \cdot u-1.
$$
If $h(u)=0$, then
$$
(u_0 ^7-u^0)(u^8-1)-(u_0 ^8-1)(u^7-u)=0
$$
which is the same as
\begin{eqnarray*}
\begin{array}{ccc}
(u_0^2-1)(u^2-1)(u-u_0)(u_0 u-1)& \cdot (u_0 u+5u+2u_0-1) & ~\\
~                               & \cdot (u_0 u+4u+3u_0-1) & ~\\
~                               & \cdot (u_0 u+3u+4u_0-1) & ~\\
~                               & \cdot (u_0 u+2u+5u_0-1) & =0.
\end{array}
\end{eqnarray*}
This means that $h(u)=0$ has eight roots as follows:
\begin{eqnarray}\label{eq-8roots}
1,~-1,~u_0,~u_0 ^{-1},~u_1, ~u_2, ~ u_3, ~u_4,
\end{eqnarray}
where
\begin{eqnarray}\label{eq:u14}
\left\{
\begin{array}{l}
u_1=\frac{-2u_0+1}{u_0-2} \\
u_2=\frac{-3u_0+1}{u_0-3} \\
u_3=\frac{-4u_0+1}{u_0-4} \\
u_4=\frac{-5u_0+1}{u_0-5}.
\end{array}
\right.
\end{eqnarray}
Note that it is not hard to verify that $u_3=u_1^{-1}$, $u_4=u_2^{-1}$ and
$u_i ^q=1/u_i$ for any $i\in\{1,2,3,4\}$, which means that
$u_i^{q+1}=1$, i.e., $u_i\in U_{q+1}$ for any $i\in\{1,2,3,4\}$. Thus, the eight roots given by (\ref{eq-8roots}) are
in $U_{q+1}$. Moreover, these eight roots in (\ref{eq-8roots}) are pairwise distinct. Suppose that $u_1 =u_0$,
then we have $u_0 ^2=1$. This means that $u_0=1$ or $u_0=-1$, which is contrary to our assumption that $1,-1,u_0$ are pairwise distinct in $U_{q+1}$. Thus, $u_1 \neq u_0$.  By similar discussions, it is easily obtain that all elements in (\ref{eq-8roots}) are pairwise distinct. This completes the proof.
\end{proof}

\subsection{The parameters of cyclic codes}
In this subsection, we will determine the parameters of  the cyclic code $\C_{m}$  and its dual $\C_{m}^\perp$, and prove that these codes hold $3$-designs.

\begin{theorem}\label{thm-d4}
Let $q=7^m$ with $m \geq 2$ being a positive integer. Then the code $\C_{m}^\perp$ over $\gf(q)$
has the parameters $[q+1, q-3, 4]$. Furthermore, the minimum weight codewords in $\C_{m}^\perp$ support a $3$-$(q+1, 4, 5)$ design.
\end{theorem}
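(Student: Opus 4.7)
My plan is to reduce the theorem to the two determinantal lemmas \ref{lem-det3} and \ref{lem-unq5w}: the first forces $d^\perp \geq 4$, while the second simultaneously yields $d^\perp \leq 4$ and the design parameter $\lambda = 5$. Beyond these I only need a short dimension computation and a Galois-descent step to turn a singularity over $\gf(q^2)$ into a $\gf(q)$-rational codeword.

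First I would verify $\dim_{\gf(q)} \C_m = 4$, which gives $\dim \C_m^\perp = q - 3$. The evaluation map $(a,b) \mapsto (\tr(au^4 + bu^3))_{u \in U_{q+1}}$ from $\gf(q^2)^2$ to $\gf(q)^{q+1}$ is $\gf(q)$-linear, and $(a,b)$ lies in its kernel iff the polynomial $au^8 + bu^7 + b^q u + a^q$ vanishes on $U_{q+1}$; since this polynomial has degree at most $8$ while $|U_{q+1}| = q + 1 > 8$, all its coefficients vanish and $(a,b) = (0,0)$. Unpacking $\sum_u c_u \tr(au^4 + bu^3) = 0$ for every $a, b \in \gf(q^2)$ together with the non-degeneracy of $\tr$ then characterises $\C_m^\perp$ by the two $\gf(q^2)$-conditions $\sum_u c_u u^3 = 0$ and $\sum_u c_u u^4 = 0$, and Frobenius (applied to $c_u \in \gf(q)$) gives the equivalent companions $\sum_u c_u u^{-3} = \sum_u c_u u^{-4} = 0$.

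For $d^\perp \geq 4$ I would fix any three distinct $x, y, z \in U_{q+1}$ and rescale each column of the parity matrix by $u^4$; the $3 \times 3$ submatrix with rows $1, u, u^7$ is then exactly $\bar M(x,y,z)$ from (\ref{eq-M3}), and Lemma \ref{lem-det3} makes it nonsingular, so no codeword of weight $\leq 3$ exists. For $d^\perp = 4$ together with the block count I would fix $\{x, y, z\} \in \binom{U_{q+1}}{3}$ and ask for which $w \in U_{q+1} \setminus \{x, y, z\}$ a weight-$4$ codeword supported on $\{x, y, z, w\}$ exists. The same column rescaling plus a row permutation identifies the conjugation-invariant $4 \times 4$ coefficient matrix (rows indexed by $u^{-4}, u^{-3}, u^3, u^4$) with the matrix $M(x, y, z, w)$ of (\ref{eq-M4}), so Lemma \ref{lem-unq5w} produces exactly five admissible $w$, all distinct from $x, y, z$. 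Since every three of the four columns are independent by Lemma \ref{lem-det3}, the singular $4 \times 4$ matrix has rank exactly $3$, and conjugation invariance lets Galois descent furnish a one-dimensional $\gf(q)$-kernel whose nonzero vectors all have full support $\{x, y, z, w\}$ (a zero coordinate would return to a rank-$3$ restriction and force the entire vector to vanish).

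Combining these observations, $d^\perp = 4$ and every $3$-subset of $U_{q+1}$ lies in exactly five weight-$4$ supports, which is precisely the condition for a $3$-$(q+1, 4, 5)$ design via (\ref{eq:bb}), after identifying codewords up to scalar in the count. The step I anticipate to be most delicate is the Galois-descent and support-size bookkeeping: I must turn a singular conjugation-stable $4 \times 4$ system over $\gf(q^2)$ into a $\gf(q)$-rational codeword of weight exactly $4$ (not less), which the combination of conjugation invariance and Lemma \ref{lem-det3}'s rank-$3$ guarantee accomplishes in one stroke.
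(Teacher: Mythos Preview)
Your proposal is correct and follows essentially the same route as the paper: both use Lemma~\ref{lem-det3} to force $d^\perp\ge 4$ and Lemma~\ref{lem-unq5w} to produce exactly five completions $w$ of each $3$-subset $\{x,y,z\}$, yielding $d^\perp=4$ together with the $3$-$(q+1,4,5)$ design. The only differences are cosmetic: the paper obtains the dimension via the generator polynomial $g_3(x)g_4(x)$ of $\C_m^\perp$ rather than by injectivity of the evaluation map, and it simply asserts the existence of a $\gf(q)$-rational kernel vector where you (correctly) make the Galois-descent step explicit.
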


\begin{proof}
It follows from definitions that  the code $\C_{m}^\perp$ has length $q+1$.
Let $\alpha$ be a generator of the multiplicative group $\gf(q^2)^*$ and define $\beta=\alpha^{q-1}$. Then $\beta \in U_{q+1}$ is an $q+1$-th primitive root of unity
in the field $\gf(q^2)$. Let $g_i (x)$ denote the
minimal polynomial of $\beta^{i}$ over $\gf(q)$, where $i \in \{3,4\}$. Note that $g_i (x)$  has only the roots $\beta^{i}$ and
$\beta^{-i}$. We then deduce that $g_3 (x)$ and $g_4 (x)$ are pairwise distinct irreducible polynomials of degree
2. By definition, the generator polynomial of $\C_{m}^\perp$ is $g_3 (x) g_4 (x) $ with degree 4.
Thus, $\C_{m}^\perp$ has dimension $q+1-4=q-3$.

Let $U_{q+1}=\{x_1,x_2,x_3...,x_{q+1}\}$. Define
\begin{eqnarray}\label{eq:H}
H=\left[
\begin{array}{rrrrr}
x_1^{-4}  & x_2 ^{-4}  & x_3^{-4}  & \cdots & x_{q+1}^{-4}  \\ [2mm]
x_1^{-3}  & x_2 ^{-3}  & x_3^{-3}  & \cdots & x_{q+1}^{-3}  \\ [2mm]
x_1^{~3}    & x_2 ^{~3} & x_3 ^{~3} & \cdots & x_{q+1} ^{~3} \\ [2mm]
x_1^{~4}   & x_2 ^{~4}  & x_3^{~4}  & \cdots & x_{q+1}^{~4}
\end{array}
\right].
\end{eqnarray}
It is easily observed that
\begin{eqnarray}\label{eq:cmperp}
\C_{m}^\perp=\{\bc \in \gf(q)^{q+1}: \bc H^T=\bzero\}.
\end{eqnarray}
By Lemma \ref{eq-M3} and Equations (\ref{eq:H}) and (\ref{eq:cmperp}), we have the minimum distance $d$
of $\C_{m}^\perp $ is at least $4$. Next we will prove that $d=4$.

Let $\{x, y, z, w\} \in \binom{U_{q+1}}{4}$. Without the loss of generality,
we assume that
$$
x=x_{i_1}, \ y=x_{i_2}, \ z=x_{i_3},  \ w=x_{i_4},
$$
where $1 \leq i_1<i_2<i_3<i_4 \leq q+1$.
Since $d \geq 4$, the rank of the matrix
$M(x,y,z,w)$
equals $3$, where $M(x,y,z,w)$ was defined by (\ref{eq-M4}). Let $(u_{i_1}, u_{i_2}, u_{i_3}, u_{i_4})\in \gf(q)^4$ denote a nonzero solution of
\begin{eqnarray*}
\left[
\begin{array}{llll}
1  & 1 & 1 & 1\\
x  & y & z & w\\
x^7 & y^7 & z^7& w^7 \\
x^8 & y^8 & z^8& w^8
\end{array}
\right]
\left[
\begin{array}{llll}
u_{i_1} \\
u_{i_2} \\
u_{i_3} \\
u_{i_4}
\end{array}
\right]
= \bzero.
\end{eqnarray*}
Since the rank of the matrix $M(x,y,z,w)$ is $3$, all these $u_{i_j} \neq 0$.
Define a vector $\bc=(c_0, c_1, \ldots, c_n) \in \gf(q)^{n+1}$, where $c_{i_j}=u_{i_j}$ for $j \in \{1,2,3,4\}$
and $c_h =0$ for all $h \in \{0,1, \ldots, n\} \setminus \{i_1, i_2, i_3, i_4\}$. It is easily observed that $\bc$ is a
codeword with Hamming weight $4$ in $\C_{m}^\perp $. The set $\{a\bc: a \in \gf(q)^*\}$ consists of all such codewords
of Hamming weight $4$ with nonzero coordinates in $\{i_1, i_2, i_3, i_4\}$. Hence, the code $\C_{m}^\perp $ has minimum distance $d=4$. Meanwhile, every codeword of Hamming weight $4$ in $\C_{m}^\perp $ with
nonzero coordinates in $\{i_1, i_2, i_3, i_4\}$ must correspond to the set $\{x,y,z,w\}$. Further, from $|M(x,y,z,w)|=0$ and Lemma \ref{lem-unq5w}, it follows that every codeword of weight $4$ and its nonzero multiples in $\C_{m}^\perp $ correspond to five such set $\{x,y,z,w\}$ . We then deduce that the codewords of
weight $4$ in $\C_{m}^\perp $ support a $3$-$(q+1, 4, 5)$ design. Thus, the number of the codewords of weight $4$ in $\C_{m}^\perp $ is
$$
A_4^\perp=(q-1)\cdot  \frac{\binom{q+1}{3}}{\binom{4}{3}} \cdot 5 = \frac{5(q-1)^2 q (q+1)}{24}.
$$
This completes the proof.
\end{proof}

\begin{example}\label{exam-1}
Let $m=2$. Then the code $\C_{m}^\perp $ has the parameters $[50, 46, 4]$. The number of the codewords of weight $4$ in $\C_{m}^\perp $ is $A_4^\perp=1176000$.
The codewords of weight $4$ in $\C_{m}^\perp $ support a $3$-$(50,4,5)$ design.
\end{example}

It is now time to determine the parameters of the cyclic code $\C_{m}$, which is described in the following theorem.

\begin{theorem}\label{thm-cm}
Let $q=7^m$ with $m \geq 2$ being a integer. Then we have the following results.
\begin{enumerate}
  \item [(\uppercase\expandafter{\romannumeral1})]
The code $\C_{m}$ over $\gf(q)$
has the parameters $[q+1, 4, q-7]$
and the weight enumerator
\begin{eqnarray}\label{cmw}
1+ \frac{1}{336} (q-1)^2 q (q+1) z^{q-7} + \frac{1}{12}(q-1) q (1 + q) (7 + 5 q) z^{q-1}  +  \\
 ~~~\frac{1}{7}(q-1) (1 + q) (7 + ( q-1) q) z^{q} + \frac{7}{16} ( q-1)^2 q (1 + q)  z^{q+1}.            \nonumber
\end{eqnarray}
  \item [(\uppercase\expandafter{\romannumeral2})] The code $\C_{m}$ and its dual $\C_{m}^\perp$ support $3$-designs. Furthermore, the codewords of weight $q-7$ in $\C_{m}$  hold a $3$-$(q+1, q-7, \lambda)$ design, where
  $$
  \lambda=\frac{(q-7)(q-8)(q-9)}{336}.
  $$
The complement of this design is a $3$-$(q+1,8,1)$ design, i.e., Steiner systems $S(3,8,q +1)$.
\end{enumerate}
\end{theorem}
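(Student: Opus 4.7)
The plan is to establish part (I) by using Lemma \ref{lem-zero} to identify the possible weights in $\C_m$ and then pin down their frequencies through the Pless power moments, and to derive part (II) from the generalised Assmus--Mattson theorem (Theorem \ref{thm-designGAMtheorem}) together with the parameters computed in part (I).

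For part (I), the length $q+1$ is immediate from the definition of $\C_m$. I would view $\C_m$ as the image of the $\gf(q)$-linear map $\phi:\gf(q^2)^2\to\gf(q)^{q+1}$, $(a,b)\mapsto (\tr(au^4+bu^3))_{u\in U_{q+1}}$. Lemma \ref{lem-zero} tells me that for nonzero $(a,b)$ the function $f(u)=\tr(au^4+bu^3)$ has at most $8$ zeros in $U_{q+1}$; since $q+1\geq 50>8$, $\phi$ is injective, so $\dim_{\gf(q)}\C_m=4$. The same lemma forces $\#\mathrm{Zero}(f)\in\{0,1,2,8\}$ for every nonzero $(a,b)$, so the weight of any nonzero codeword lies in $\{q-7,q-1,q,q+1\}$. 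Denoting the corresponding multiplicities by $A_{q-7},A_{q-1},A_q,A_{q+1}$, I would combine these four unknowns with $A_0=1$ and substitute into the first four Pless power moment identities \eqref{eq:PPM}, using $A_1^\perp=A_2^\perp=A_3^\perp=0$ (because $d^\perp=4$) and $A_4^\perp=5(q-1)^2q(q+1)/24$, both supplied by Theorem \ref{thm-d4}. The resulting $4\times 4$ linear system has a Vandermonde-type coefficient matrix on the pairwise distinct points $q+1,q,q-1,q-7$ and is therefore invertible; solving it yields the weight enumerator \eqref{cmw} and in particular $A_{q-7}=(q-1)^2q(q+1)/336>0$, so the minimum distance is exactly $q-7$.

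For part (II), I would apply Theorem \ref{thm-designGAMtheorem} with $t=3$, $d=q-7$, $d^\perp=4$ and $S=\{q-7\}$ (so $s=1$). The hypothesis on $\C_m$ requires each $\cB_\ell(\C_m)$ with $\ell\in\{q-6,q-5,q-4,q-3,q-2\}$ to be a $3$-design, which holds vacuously because $\C_m$ has no codewords of those weights; the hypothesis on $\C_m^\perp$ requires each $\cB_{\ell^\perp}(\C_m^\perp)$ with $0\leq\ell^\perp\leq 3$ to be a $3$-design, which is again trivial since $d^\perp=4$. The theorem therefore concludes that $\cB_k(\C_m)$ and $\cB_k(\C_m^\perp)$ are $3$-designs for every $3\leq k\leq q+1$. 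For the weight-$(q-7)$ design, the scalar redundancy under $\gf(q)^*$ makes every support appear $(q-1)$ times in the multiset, so the number of distinct blocks is $b=A_{q-7}/(q-1)=(q-1)q(q+1)/336$; the identity \eqref{eq:bb} then gives $\lambda=b\binom{q-7}{3}/\binom{q+1}{3}=(q-7)(q-8)(q-9)/336$. Passing to complementary $8$-subsets gives $\lambda'=b\binom{8}{3}/\binom{q+1}{3}=1$, identifying the complement with a Steiner system $S(3,8,q+1)$.

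The main obstacle lies in executing the Pless-moment step cleanly: although the $4\times 4$ system is automatically solvable, its right-hand sides are polynomials in $q$ of fairly high degree and the simplification into the compact form displayed in \eqref{cmw} requires careful symbolic manipulation. A secondary but genuine point is the bookkeeping that converts the multiset $\cB_{q-7}(\C_m)$ into the simple-design block count $b=A_{q-7}/(q-1)$, since this factor of $q-1$ is exactly what reconciles the weight enumerator of $\C_m$ with the Steiner parameter $\lambda'=1$ of the complementary design.
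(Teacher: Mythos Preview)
Your proposal is correct and follows essentially the same route as the paper: Lemma \ref{lem-zero} restricts the nonzero weights to $\{q-7,q-1,q,q+1\}$, the first four Pless power moments (which require only $A_1^\perp=A_2^\perp=A_3^\perp=0$ from Theorem \ref{thm-d4}; the value of $A_4^\perp$ you quote is not actually needed there) pin down the weight enumerator, and Theorem \ref{thm-designGAMtheorem} with $s=1$, $t=3$ yields the $3$-designs and the Steiner system exactly as you describe. The only cosmetic difference is that you obtain $\dim\C_m=4$ via injectivity of the evaluation map, whereas the paper reads it off from $\dim\C_m^\perp=q-3$ established in Theorem \ref{thm-d4}.
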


\begin{proof}
(\uppercase\expandafter{\romannumeral1}) By definition, it is clear that  the code $\C_{m}$  has length $q+1$. By Theorem \ref{thm-d4}, the dual code $\C_{m}^\perp$ of $\C_{m} $ has dimension $q-3$. Thus, the dimension of the code $\C_{m}$ is 4.

Further, by definitions and Lemma \ref{lem-zero}, the minimum distance of $\C_{m}$ is $q-7$ and the code $\C_{m}$ has at most four nonzero weights, i.e.,
$$
w_1=q-7,~w_2=q-1,~w_3=q,~w_4=q+1~.
$$
We now determine the number $A_{w_i}$ of the codewords with weight $w_i$ in $\C_{m}$.
Since $\C_{m}^\perp$ has minimum distance $d=4$, the first four Pless Power Moments lead to the following system of equations:
\begin{eqnarray}\label{eq:moment}
\left\{
\begin{array}{l}
A_{w_1}+A_{w_2}+A_{w_3}+A_{w_4}=q^4 -1 \\
w_1 A_{w_1}+w_2 A_{w_2}+w_3 A_{w_3}+w_4 A_{w_4}=q^3(q-1)n \\
w_1 ^2 A_{w_1}+w_2 ^2 A_{w_2}+w_3 ^2 A_{w_3}+w_4 ^2 A_{w_4}=q^2 (q-1)n (qn-n+1)\\
w_1 ^3 A_{w_1}+w_2 ^3 A_{w_2}+w_3 ^3 A_{w_3}+w_4 ^3 A_{w_4}=q (q-1)n (q^2 n^2-2qn^2+3qn-q+n^2-3n+2),
\end{array}
\right.
\end{eqnarray}
where $n=q+1$. Solving the system of equations in (\ref{eq:moment}) yields
the weight enumerator in (\ref{cmw}).

(\uppercase\expandafter{\romannumeral2})
By the conclusions of (\uppercase\expandafter{\romannumeral1}) and Theorem \ref{thm-d4}, from Theorem \ref{thm-designGAMtheorem} we get that both $\C_{m}$ and $\C_{m}^\perp$ hold $3$-designs.
By (\ref{cmw}), the number of the codewords with weight $q-7$ in $\C_m$ is
$$A_{q-7}=\frac{1}{336} (q-1)^2 q (q+1).$$
Since $q-7$ is the minimum
weight of $\C_{m}$, the number of supports of the codewords of weight $q-7$ is
\begin{eqnarray}\label{b}
b=\frac{A_{q-7}}{q-1}=\frac{1}{336} (q-1) q (q+1).
\end{eqnarray}
Then the values of $\lambda$ follow from Equations (\ref{b}) and (\ref{eq:bb}).

By definitions, the complement of the supports of the codewords with the minimum weight $q-7$ in $\C_m$ holds a $3$-$(q+1,8,\lambda')$ and the number of this supports equals the value of $b$ of (\ref{b}). Then we deduce that $\lambda'=1$ from
$$
b=\frac{1}{336} (q-1) q (q+1)= \lambda' \cdot  \frac{\binom{q+1}{3}}{\binom{8}{3}}.
$$
This means that the complement of the supports of the minimum
weight codewords in $\C_m$ forms a Steiner system $S(3,8,q +1)$.
This completes the proof.

\end{proof}

\begin{example}\label{exam-2}
Let $m=2$. Then the code $\C_{2}$ has the parameters $[50, 4, 42]$ and weight
enumerator
$$1+16800 z^{42}+ 2469600 z^{48 } + 808800 z^{ 49}+ 2469600 z^{ 50},$$
which is verified by a Magma program.
\end{example}

\begin{example}\label{exam-3}
Let $m=3$. Then the code $\C_{3}$ has the parameters $[ 344, 4, 336 ]$ and weight
enumerator
$$1+41073858 z^{336}+ 5790693384 z^{342 } + 1971662832 z^{ 343}+ 6037857126 z^{344},$$
which is verified by a Magma program.
\end{example}

\subsection{ Automorphism groups of cyclic codes}

In this subsection, we will show that the cyclic code $\C_{m}$  and its dual $\C_{m}^\perp$  are invariant under group actions of certain permutation groups which are
$3$-transitive, i.e.,  the automorphism groups of those code are $3$-transitive. To this end, we use the similar method in Liu et al \cite{Liudingtang2021}.

Define
\begin{eqnarray}\label{eq-u(q+1)}
\mathrm{Stab}_{U_{q+1}}=
 \left\{ \left( \begin{array}{cc}
\beta_2 ^q & \beta_1 ^q\\
\beta_1 & \ \beta_2
\end{array} \right) \in \mathrm{PGL} (2,q^2): ~\beta_1, \beta_2 \in \gf(q^2),  \beta_1 ^{q+1} \neq \beta_2 ^{q+1}\right\}.
\end{eqnarray}
Then we have
\begin{eqnarray}\label{eq:stabilizers}
\mathrm{Stab}_{U_{q+1}}= \left( \begin{array}{cc}
u_0 & 1\\
1 & u_0
\end{array} \right) \PGL(2,  q)  \left( \begin{array}{cc}
u_0 & 1\\
1 & u_0
\end{array} \right)^{-1}
\end{eqnarray}
with $u_0\in U_{q+1}\setminus \{\pm 1\} $  and the following result which was documented  in \cite[Proposition 5]{Liudingtang2021}.

\begin{lemma}\cite{Liudingtang2021} \label{prop:Stab-U}
Let symbols and notation be the same as before. Let $q=7^m$  with $m\geq 2$ being a positive integer.  Then the setwise stabilizer of $U_{q+1}$ can be expressed as $\mathrm{Stab}_{U_{q+1}}$ defined by (\ref{eq-u(q+1)}). Moreover, the action of $\mathrm{Stab}_{U_{q+1}}$ on $U_{q+1}$ is equivalent to the action of $\PGL(2, q)$ on the projective line $\PG(1,  q)$ and $\mathrm{Stab}_{U_{q+1}}$ is $3$-transitive.
\end{lemma}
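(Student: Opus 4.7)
The plan is to reduce the entire statement to the classical fact that $\PGL(2,q)$ acts sharply $3$-transitively on the projective line $\PG(1,q)$. I will proceed in three steps: (i) exhibit an explicit $\gf(q^2)$-M\"obius transformation that bijects $\PG(1,q)$ with $U_{q+1}$; (ii) verify that conjugation by this transformation carries $\PGL(2,q)$ onto the abstract setwise stabilizer of $U_{q+1}$, and that this stabilizer equals the parametrized set in (\ref{eq-u(q+1)}); (iii) transport $3$-transitivity across the conjugation.

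For step (i), fix $u_0 \in U_{q+1}\setminus\{\pm 1\}$ (which exists since $|U_{q+1}| = q+1 \geq 50$) and let $\phi$ be the M\"obius transformation with matrix $A = \begin{pmatrix} u_0 & 1 \\ 1 & u_0 \end{pmatrix}$, so $\phi(z) = (u_0 z + 1)/(z + u_0)$; its determinant $u_0^2 - 1$ is nonzero, so $\phi$ is invertible in $\PGL(2,q^2)$. For $z \in \PG(1,q)$ one has $z^q = z$, and using $u_0^q = u_0^{-1}$ a short calculation gives
\[
\phi(z)^q \;=\; \frac{u_0^q z + 1}{z + u_0^q} \;=\; \frac{z + u_0}{u_0 z + 1} \;=\; \phi(z)^{-1},
\]
so $\phi(z) \in U_{q+1}$; together with $\phi(\infty) = u_0 \in U_{q+1}$, injectivity of $\phi$, and $|\PG(1,q)| = q+1 = |U_{q+1}|$, this shows that $\phi$ restricts to a bijection $\PG(1,q) \to U_{q+1}$. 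Step (iii) is then immediate: for any $g \in \PGL(2,q)$ acting on $\PG(1,q)$, the conjugate $A g A^{-1}$ acts on $U_{q+1}$ in a way that the bijection $\phi$ intertwines with the action of $g$, so the sharp $3$-transitivity of $\PGL(2,q)$ transfers verbatim to the conjugate group on $U_{q+1}$.

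The crux of the argument is step (ii). One inclusion is direct: any $M = \begin{pmatrix} \beta_2^q & \beta_1^q \\ \beta_1 & \beta_2 \end{pmatrix}$ with $\beta_1^{q+1} \neq \beta_2^{q+1}$ sends $u \in U_{q+1}$ to an element satisfying $M(u)\cdot M(u)^q = 1$ by a one-line computation that uses $\beta_i^{q+1} \in \gf(q)$, while nondegeneracy follows from $\det M = \beta_2^{q+1} - \beta_1^{q+1} \neq 0$. The reverse inclusion will be the hard part. Given $M = \begin{pmatrix} a & b \\ c & d \end{pmatrix}$ preserving $U_{q+1}$, the condition $M(u)^{q+1} = 1$ combined with $u^q = u^{-1}$ translates into the polynomial identity $(au+b)(a^q + b^q u) = (cu+d)(c^q + d^q u)$ on $q+1 > 2$ values of $u$, hence identically as polynomials of degree $2$; matching coefficients yields $ab^q = cd^q$, $a^{q+1} + b^{q+1} = c^{q+1} + d^{q+1}$, and $a^q b = c^q d$. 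The main obstacle I expect is to reduce these relations, modulo the scalar $\gf(q^2)^*$-action on $(a,b,c,d)$, to the symmetric form $(\beta_2^q, \beta_1^q, \beta_1, \beta_2)$; I would branch on whether $c = 0$ and invoke Hilbert $90$ (i.e.\ surjectivity of the norm $x \mapsto x^{q+1} : \gf(q^2)^* \to \gf(q)^*$ with kernel $U_{q+1}$) to select a scalar aligning $(a,b)$ with the conjugate pair $(d^q, c^q)$. Once this reduction is complete, (\ref{eq:stabilizers}) follows by comparing two subgroups of $\PGL(2,q^2)$ of the common order $q(q^2-1)$ that both act faithfully and $3$-transitively on $U_{q+1}$, and the full lemma drops out.
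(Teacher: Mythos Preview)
The paper does not supply its own proof of this lemma: it is imported wholesale from \cite{Liudingtang2021} (it is Proposition~5 there), so there is nothing in the present paper to compare your argument against beyond the displayed identity (\ref{eq:stabilizers}), which your matrix $A=\begin{pmatrix} u_0 & 1\\ 1 & u_0\end{pmatrix}$ reproduces exactly.

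Your outline is correct and is essentially the standard argument; one remark on efficiency. The polynomial-identity computation you set up for the reverse inclusion in step~(ii) --- matching the coefficients of $(au+b)(a^q+b^qu)=(cu+d)(c^q+d^qu)$ and then invoking Hilbert~90 to normalise --- works, but it is not needed. Once step~(i) gives $A\,\PGL(2,q)\,A^{-1}$ inside the setwise stabiliser, and your one-line check shows that every matrix in (\ref{eq-u(q+1)}) also lies in the stabiliser, you can finish by order count alone: $\PGL(2,q^2)$ is sharply $3$-transitive on $\PG(1,q^2)$, so the setwise stabiliser of the $(q{+}1)$-set $U_{q+1}$ injects into $\mathrm{Sym}(U_{q+1})$ with trivial three-point stabiliser, hence has order at most $(q{+}1)q(q{-}1)=q(q^2-1)$. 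Both $A\,\PGL(2,q)\,A^{-1}$ and the parametrised set already attain this order (for the latter, count pairs $(\beta_1,\beta_2)$ with $\beta_1^{q+1}\neq\beta_2^{q+1}$ modulo the $\gf(q)^*$-scaling that identifies matrices in $\PGL(2,q^2)$), so all three groups coincide and (\ref{eq:stabilizers}) follows without ever solving the coefficient equations. This is the shortcut you gesture at in your last sentence; it renders the middle of your step~(ii) redundant.
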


Denote the set
\begin{eqnarray}\label{eq:sf}
\begin{array}{c}
\mathcal{SF}:= \left\lbrace  \tr \left( a u^4+ b u^3\right) \in  \gf(q^2)[u]/\langle u^{q+1}-1 \rangle: a, b\in \gf(q^2)  \right\rbrace
\end{array}
\end{eqnarray}
and the operator $'\circ'$ is defined by
\begin{eqnarray}\label{eq:action-circ}
\begin{array}{c}
(G\circ f) (u):= (\beta_1 u+\beta_2)^{4(q+1)} f\left(\frac{\alpha_1 u+\alpha_2}{\beta_1 u+\beta_2}\right),
\end{array}
\end{eqnarray}
where
$G=\left( \begin{array}{cc}
\alpha_1 & \alpha_2\\
\beta_1 & \ \beta_2
\end{array} \right)^{-1} \in \mathrm{GL} (2,q^2)$
and $f\in \mathcal{PF}$.

Let $\bar{G}=\left( \begin{array}{cc}
\beta_2 ^q & \beta_1 ^q\\
\beta_1 & \ \beta_2
\end{array} \right)^{-1} \in \mathrm{GL} (2,q^2)$
and
denote
\begin{eqnarray}
\mathrm{\overline{Stab}}_{U_{q+1}}=
 \left\{ \left( \begin{array}{cc}
\beta_2 ^q & \beta_1 ^q\\
\beta_1 & \ \beta_2
\end{array} \right) \in \mathrm{GL} (2,q^2) : ~\beta_1, \beta_2 \in \gf(q^2),  \beta_1 ^{q+1} \neq \beta_2 ^{q+1}\right\}.
\end{eqnarray}

Next we will show that the linear space $\mathcal{PS}$ under the action of the group $\mathrm{\overline{Stab}}_{U_{q+1}}$  is invariant. To this end, we need the results in the following two lemmas.


\begin{lemma}\label{lem:action-linear}
Let $q=7^m$ with $m\geq 2$ being a positive integer.
Let $\bar{G}=\left( \begin{array}{cc}
\beta_2 ^q & \beta_1 ^q\\
\beta_1 & \ \beta_2
\end{array} \right)^{-1} \in \mathrm{GL} (2,q^2)$  and $f\in \mathcal{SF}$.  Then $\bar{G}\circ f \in \mathcal{SF}$.
\end{lemma}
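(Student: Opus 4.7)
The plan is to verify $\bar G\circ f\in\mathcal{SF}$ by direct substitution, leveraging two structural observations that collapse the apparent complexity. Throughout, write $A := \beta_2^q u + \beta_1^q$ and $D := \beta_1 u + \beta_2$ so that $\sigma(u) := A/D$ is the image of $u$ under the fractional linear map defined by $\bar G$. By Lemma \ref{prop:Stab-U}, $\sigma$ preserves $U_{q+1}$; hence for $u\in U_{q+1}$ the element $v := \sigma(u)$ lies in $U_{q+1}$ and in particular $v^q = v^{-1}$. Expanding the trace, the defining formula for $\bar G\circ f$ yields
\begin{equation*}
(\bar G\circ f)(u) \;=\; D^{4(q+1)}\bigl(a v^4 + a^q v^{-4} + b v^3 + b^q v^{-3}\bigr).
\end{equation*}

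The first key observation is the pair of identities
\begin{equation*}
A^q \;=\; D/u, \qquad D^q \;=\; A/u \qquad (u\in U_{q+1}),
\end{equation*}
which follow from $u^q = u^{-1}$ by a one-line computation. These imply $A^{q+1} = D^{q+1}$, and combining them with $v^{\pm k} = A^{\pm k} D^{\mp k}$ simplifies
\begin{equation*}
D^{4(q+1)} v^4 = A^8/u^4, \quad D^{4(q+1)} v^{-4} = D^8/u^4, \quad D^{4(q+1)} v^3 = A^7 D/u^4, \quad D^{4(q+1)} v^{-3} = A D^7/u^4.
\end{equation*}
Therefore
\begin{equation*}
(\bar G\circ f)(u) \;=\; u^{-4}\bigl(a A^8 + a^q D^8 + b A^7 D + b^q A D^7\bigr).
\end{equation*}

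The second key observation is that since $q = 7^m$ and the characteristic is $7$, the Frobenius identity $(X+Y)^7 = X^7 + Y^7$ gives the contractions $A^7 = \beta_2^{7q} u^7 + \beta_1^{7q}$ and $D^7 = \beta_1^7 u^7 + \beta_2^7$. Consequently, each of the four products $A^8$, $D^8$, $A^7 D$, $AD^7$ has only four monomial terms, of $u$-degrees $0, 1, 7, 8$. Multiplying by $u^{-4}$ and reducing modulo $u^{q+1}-1$ (so $u^{-k} \equiv u^{q+1-k}$) converts these into Laurent exponents in $\{4,3,-3,-4\}$. Collecting the coefficient $\alpha$ of $u^4$ and $\beta$ of $u^3$, a routine inspection (using $\beta_i^{q^2}=\beta_i$) shows that the coefficient of $u^{-4}$ equals $\alpha^q$ and the coefficient of $u^{-3}$ equals $\beta^q$; hence
\begin{equation*}
(\bar G\circ f)(u) \;=\; \tr(\alpha u^4 + \beta u^3) \in \mathcal{SF}.
\end{equation*}

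The main obstacle would be keeping the algebra organized through the many substitutions; the identities $A^q = D/u$ and $D^q = A/u$ are what make the calculation tractable, as otherwise the degrees of the intermediate expressions quickly balloon beyond control. Once those identities are exploited, the characteristic-$7$ collapse reduces the remainder of the verification to a short bookkeeping of coefficients, which also delivers an explicit formula for $(\alpha,\beta)$ in terms of $(a,b,\beta_1,\beta_2)$ that will be useful later for identifying the automorphism group.
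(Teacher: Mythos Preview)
Your proof is correct and follows essentially the same route as the paper's. Both arguments hinge on (i) the relation $u^q=u^{-1}$ for $u\in U_{q+1}$, which you package as $A^q=D/u$, $D^q=A/u$ while the paper writes it directly as $(\beta_1u+\beta_2)^{4q}=(\beta_1^qu^{-1}+\beta_2^q)^4$, and (ii) the characteristic-$7$ collapse, which you phrase via the Frobenius $(X+Y)^7=X^7+Y^7$ and the paper phrases via $7\mid\binom{8}{i}$ for $2\le i\le 6$. The only cosmetic difference is that the paper splits $f=\tr(au^4)+\tr(bu^3)$ and handles each summand separately, whereas you treat the full $f$ at once; your verification that the coefficients of $u^{-4}$ and $u^{-3}$ are the $q$-th powers of those of $u^4$ and $u^3$ is exactly what the paper records in its explicit formulas for $\bar G\circ f_1$ and $\bar G\circ f_2$.
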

\begin{proof}
Denote $f_1=\tr(au^4) \in  \gf(q^2)[u]/\langle u^{q+1}-1\rangle$  and $f_2=\tr(bu^3) \in  \gf(q^2)[u]/\langle u^{q+1}-1\rangle$. We only need to prove
$(\bar{G}\circ f_1)(u)\in \mathcal{SF}$ and $(\bar{G}\circ f_2)(u)\in \mathcal{SF}$ for any $(a,b)\in \gf(q^2)^2$.

For any $a \in \gf(q^2)$, from definitions we have
\begin{eqnarray}\label{Eqn:A-g}
\begin{array}{l}
(\bar{G}\circ f_1)(u) \\
= (\beta_1 u+\beta_2)^{4(q+1)} f_1\left(\frac{\beta_2 ^q  u+\beta_1 ^q}{\beta_1 u+\beta_2}\right),\\
=\tr\left( a \cdot (\beta_1 u+\beta_2)^{4q}  (\beta_1 u+\beta_2)^{4}  \cdot  \frac{(\beta_2 ^q  u+\beta_1 ^q)^4}{(\beta_1 u+\beta_2)^4}                       \right ) \\
=\tr\left( a \cdot (\beta_1^q u^{-1}+\beta_2^q)^{4}  \cdot  u^4 (\beta_2 ^q +\beta_1 ^q u^{-1})^4  \right)  \\
=\tr\left( a u^4 \cdot (\beta_1^q u^{-1}+\beta_2^q)^{8}   \right)
\end{array}
\end{eqnarray}
Note that $7| \binom{8}{i}$ for any $i \in \{2,3,4,5,6\}$.
Therefore, from the Binomial Theorem we have
\begin{eqnarray}\label{Eqn:A-f1}
u^4 \cdot (\beta_1^q u^{-1}+\beta_2^q)^{8}&=& \sum_{i=0}^{8} \binom{8}{i} \beta_1 ^{qi} \beta_2^{q(8-i)} u^{4-i},  \nonumber \\
&=& \beta_2^{8q} u^4+\beta_2^{7q} \beta_1^{q} u^3+ \beta_1^{8q} u^{-4}+\beta_2^{q} \beta_1^{7q} u^{-3} \nonumber \\
&=& \beta_2^{8q} u^4+\beta_2^{7q} \beta_1^{q} u^3+ (\beta_1^{8} u^{4})^q+(\beta_2 \beta_1^{7} u^{3})^q
\end{eqnarray}
Applying the above equation (\ref{Eqn:A-f1}) to (\ref{Eqn:A-g}), we get
\begin{eqnarray}\label{Eqn:A-f11}
\begin{array}{l}
(\bar{G}\circ f_1)(u)
=\tr\left( a(\beta_2^{8q} u^4+\beta_2^{7q} \beta_1^{q} u^3)+ a^{1/q}(\beta_1^{8} u^{4}+ \beta_2 \beta_1^{7} u^{3})   \right)  \in \mathcal{SF}.
\end{array}
\end{eqnarray}
Using the similar method on $f_1$ to $f_2$, we can easily obtain

\begin{eqnarray}\label{Eqn:A-f21}
\begin{array}{l}
(\bar{G}\circ f_2)(u)
=\tr\left( b(\beta_1 \beta_2^{7q} u^4+\beta_2^{7q+1} u^3)+ b^{1/q}(\beta_1^{7}\beta_2^{1/q} u^{4}+ \beta_1^{q+7} u^{3})   \right)  \in \mathcal{SF}
\end{array}
\end{eqnarray}
for any $b \in \gf(q^2)$. The desired conclusion then follows from Equations (\ref{Eqn:A-f11}) and (\ref{Eqn:A-f21}).
\end{proof}

According to Lemma \ref{lem:action-linear} and the definition of  $'\circ'$ in (\ref{eq:action-circ}), we can easily obtain the following result and we omit its proof.

\begin{lemma}\label{lem:representation}
Let symbols and notation be the same as before. Let $q=7^m$  with $m\geq 2$ being a positive integer and $E$ be the $2\times 2$ identity matrix. For any $\overline{G}_1, \overline{G}_2 \in  \mathrm{\overline{Stab}}_{U_{q+1}}$ and $f_1,  f_2 \in \mathcal{SF}$, we have $\overline{G}_1 \circ f_1 \in \mathcal{SF}$ , $E \circ f_1= f_1$ , $(\overline{G}_1 \overline{G}_2) \circ f_1 = \overline{G}_1 \circ  (\overline{G}_2 \circ  f_1 )$,
and $\overline{G}_1 \circ  ( a' f_1+b' f_2)=a' \overline{G}_1 \circ  f_1 +b' \overline{G}_2 \circ  f_2$ for all $a',b' \in \gf(q)$.
\end{lemma}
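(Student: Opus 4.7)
The plan is to verify each of the four assertions directly from the definition of $\circ$ given in (\ref{eq:action-circ}), invoking the preceding Lemma \ref{lem:action-linear} only for the first item. Since $\mathcal{SF}$ is by construction the $\gf(q)$-linear span of functions of the form $\tr(au^4+bu^3)$ with $(a,b)\in\gf(q^2)^2$, the invariance claim $\overline{G}_1\circ f_1\in\mathcal{SF}$ is nothing but Lemma \ref{lem:action-linear} combined with the linearity of $\circ$ in its second argument.

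Next I would dispose of the identity property. Taking $\beta_1=0$ and $\beta_2=1$ in the parametrization of $\mathrm{\overline{Stab}}_{U_{q+1}}$ forces $\alpha_1=1$, $\alpha_2=0$, so the matrix becomes $E$. Substituting into (\ref{eq:action-circ}) gives $(E\circ f_1)(u)=(0\cdot u+1)^{4(q+1)}\, f_1(u/1)=f_1(u)$. The linearity property
$\overline{G}_1\circ(a'f_1+b'f_2)=a'(\overline{G}_1\circ f_1)+b'(\overline{G}_1\circ f_2)$
(this is how I read the final assertion; the index on the second summand is a typographical slip) is equally immediate, since for a fixed Möbius transformation $\phi(u)=(\alpha_1 u+\alpha_2)/(\beta_1 u+\beta_2)$ both the evaluation map $f\mapsto f(\phi(u))$ and the multiplication by the fixed factor $(\beta_1 u+\beta_2)^{4(q+1)}$ are $\gf(q)$-linear operations on the ambient space $\gf(q^2)[u]/\langle u^{q+1}-1\rangle$.

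The substantive step is the associativity identity $(\overline{G}_1\overline{G}_2)\circ f_1=\overline{G}_1\circ(\overline{G}_2\circ f_1)$. Writing $\overline{G}_i^{-1}=\bigl(\begin{smallmatrix}\alpha_1^{(i)} & \alpha_2^{(i)}\\ \beta_1^{(i)} & \beta_2^{(i)}\end{smallmatrix}\bigr)$ and setting $\phi_i(u)=(\alpha_1^{(i)}u+\alpha_2^{(i)})/(\beta_1^{(i)}u+\beta_2^{(i)})$, I will exploit the well-known fact that Möbius transformations compose contravariantly with matrix multiplication, so $\phi_{12}:=\phi_{\overline{G}_1\overline{G}_2}=\phi_2\circ\phi_1$. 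Applying the definition twice gives
\[
\bigl(\overline{G}_1\circ(\overline{G}_2\circ f_1)\bigr)(u)=(\beta_1^{(1)}u+\beta_2^{(1)})^{4(q+1)}\bigl(\beta_1^{(2)}\phi_1(u)+\beta_2^{(2)}\bigr)^{4(q+1)} f_1(\phi_2(\phi_1(u))).
\]
The routine but mildly tedious step is then to check the cocycle identity
\[
(\beta_1^{(1)}u+\beta_2^{(1)})(\beta_1^{(2)}\phi_1(u)+\beta_2^{(2)})=\gamma_1 u+\gamma_2,
\]
where $(\gamma_1,\gamma_2)$ are precisely the bottom-row entries of $(\overline{G}_1\overline{G}_2)^{-1}=\overline{G}_2^{-1}\overline{G}_1^{-1}$; raising to the power $4(q+1)$ matches the weight factor prescribed by $\overline{G}_1\overline{G}_2$.

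The main obstacle, and indeed the only computational content, is this last cocycle verification. It is harmless because one only needs the bottom row of the product $\overline{G}_2^{-1}\overline{G}_1^{-1}$ and a clearing of denominators, but one must be careful that the exponent $4(q+1)$ (rather than $4$) is exactly what makes the factor multiplicative under composition and forces the action to be well-defined on the quotient ring $\gf(q^2)[u]/\langle u^{q+1}-1\rangle$. Once this identity is checked, all four assertions are established, proving that $\circ$ defines a left action of $\mathrm{\overline{Stab}}_{U_{q+1}}$ on $\mathcal{SF}$ by $\gf(q)$-linear maps.
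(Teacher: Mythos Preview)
Your proposal is correct and follows exactly the approach the paper indicates: the authors merely state that the lemma follows from Lemma~\ref{lem:action-linear} and the definition of $\circ$ in (\ref{eq:action-circ}) and omit the proof entirely. You have simply supplied the routine verifications (identity, linearity, and the cocycle identity for the associativity) that the paper left to the reader, and your observation about the typographical slip in the final assertion is well taken.
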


Let $\gf(q)^{U_{q+1}}$ denote the vector space consisting  of all elements $(c_u)_{u \in U_{q+1}}$,
where $c_u\in \gf(q)$.  The action of the  semidirect product $\left( \gf(q)^* \right)^{U_{q+1}} \rtimes \mathrm{Stab}_{U_{q+1}}$
on $\gf(q)^{U_{q+1}}$
is defined by
\begin{eqnarray*}
\left((a_u)_{u\in U_{q+1}};  g\right) (c_u)_{u\in U_{q+1}}  = (a_{u} c_{g^{-1}(u)})_{u \in U_{q+1}}.
\end{eqnarray*}
Then the multiplication in $\left( \gf(q)^* \right)^{U_{q+1}} \rtimes \mathrm{\overline{Stab}}_{U_{q+1}}$ is given by
\begin{eqnarray*}
\left((a_u)_{u\in U_{q+1}};  g_1\right) \left((b_u)_{u\in U_{q+1}};  g_2\right) = \left((c_u)_{u\in U_{q+1}};  g_1g_2\right),
\end{eqnarray*}
where $c_u=a_u b_{g_1^{-1}(u)}$.

The following theorem is one of the main result in this paper. It show that the code $\C_m$ and its dual admit $3$-transitive automorphism group.

\begin{theorem}\label{thm:C-group}
Let $q=7^m$ with $m\geq 2$ being a positive integer.
Define the subgroup $G_{i}$ of $\left( \gf(q)^* \right)^{U_{q+1}} \rtimes \mathrm{Stab}_{U_{q+1}}$ by
\begin{eqnarray*}
G_{i}=\left\{ \left(\left((\beta_1u+\beta_2)^{4i(q+1)}\right)_{u\in U_{q+1}}; \left( \begin{array}{cc}
\beta_2^q & \beta_1^q\\
\beta_1 & \beta_2
\end{array} \right) ^{-1} \right): \beta_1,  \beta_2 \in \gf(q^2),  \beta_1^{q+1} \neq  \beta_2^{q+1}\right\},
\end{eqnarray*}
where $i\in \{1,-1\}$. Then we have the following results.
\begin{enumerate}
  \item $G_{1}$ is a subgroup of the monomial automorphism group $\mathrm{MAut}(\C_{m})$. Moreover, the automorphism group of $\C_{m}$ is $3$-transitive.
  \item $G_{-1}$ is a subgroup of the monomial automorphism group $\mathrm{MAut}(\C_{m}^{\perp})$ and the automorphism group of $\C_{m}$ is $3$-transitive.
\end{enumerate}
\end{theorem}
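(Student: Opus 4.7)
The plan is to realize each element of $G_{1}$ as a genuine monomial automorphism of $\C_m$ via the operator $\circ$ on $\mathcal{SF}$, and then read off $3$-transitivity from Lemma~\ref{prop:Stab-U}. First I would check that the multiplier $a_u=(\beta_1 u+\beta_2)^{4(q+1)}$ really lies in $\gf(q)^*$ for every $u\in U_{q+1}$: since $u^q=u^{-1}$ on $U_{q+1}$, the quantity $(\beta_1 u+\beta_2)^{q+1}=(\beta_1 u+\beta_2)(\beta_1^q u^{-1}+\beta_2^q)$ is fixed by the Frobenius, so it lies in $\gf(q)$, and it is nonzero because the defining condition $\beta_1^{q+1}\neq\beta_2^{q+1}$ prevents $u=-\beta_2/\beta_1$ from belonging to $U_{q+1}$. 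Raising to the fourth power keeps the value in $\gf(q)^*$.

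With that in hand, I would identify every codeword of $\C_m$ with the function $f=\tr(au^4+bu^3)\in\mathcal{SF}$ that produces it, and observe that the permutation part of an element of $G_1$ is the M\"obius action of $\bar{G}$, so that $g^{-1}(u)=\bar{G}^{-1}\cdot u=(\beta_2^q u+\beta_1^q)/(\beta_1 u+\beta_2)$. A direct substitution then yields
\[
\bigl(\phi(\bc_f)\bigr)_u = a_u\, f(g^{-1}(u)) = (\beta_1 u+\beta_2)^{4(q+1)}\,f\!\left(\frac{\beta_2^q u+\beta_1^q}{\beta_1 u+\beta_2}\right) = (\bar{G}\circ f)(u),
\]
which by Lemma~\ref{lem:action-linear} again lies in $\mathcal{SF}$; therefore $\phi(\bc_f)$ is once more a codeword of $\C_m$, so $G_1\subseteq\mathrm{MAut}(\C_m)$, with the group law on $G_1$ inherited from Lemma~\ref{lem:representation}. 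The permutation projection of $G_1$ onto $\mathrm{Sym}(U_{q+1})$ coincides with $\mathrm{Stab}_{U_{q+1}}$, which acts $3$-transitively on $U_{q+1}$ by Lemma~\ref{prop:Stab-U}, so $\mathrm{Aut}(\C_m)$ is $3$-transitive.

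For part~(2) I would invoke the standard contragredient duality: whenever $((a_u);g)$ preserves $\C_m$, the monomial $((a_u^{-1});g)$ preserves the standard bilinear pairing on $\gf(q)^{U_{q+1}}$, since
\[
\sum_{u\in U_{q+1}} a_u c_{g^{-1}(u)}\,a_u^{-1} d_{g^{-1}(u)} = \sum_{v\in U_{q+1}} c_v d_v
\]
for any $\bc\in\C_m$ and $\bd\in\C_m^\perp$, and therefore it belongs to $\mathrm{MAut}(\C_m^\perp)$. Applied to $G_1$ this simply replaces $(\beta_1 u+\beta_2)^{4(q+1)}$ by $(\beta_1 u+\beta_2)^{-4(q+1)}$, producing exactly $G_{-1}\subseteq\mathrm{MAut}(\C_m^\perp)$; the permutation projection is again $\mathrm{Stab}_{U_{q+1}}$, giving $3$-transitivity of $\mathrm{Aut}(\C_m^\perp)$. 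The only real bookkeeping issue, and the single place where care is required, is to align the M\"obius conventions so that $g^{-1}$ corresponds to $\bar{G}^{-1}$ and the definition in (\ref{eq:action-circ}) matches the monomial action verbatim; once that is done, Lemmas~\ref{lem:action-linear} and \ref{prop:Stab-U} do all the work and no genuine obstacle remains.
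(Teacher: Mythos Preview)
Your proposal is correct and follows essentially the same route as the paper: identify $\C_m$ with $\{(f(u))_{u\in U_{q+1}}:f\in\mathcal{SF}\}$, match the monomial action of $G_1$ with the operator $\circ$, invoke Lemmas~\ref{lem:action-linear} and~\ref{lem:representation} to see that $\mathcal{SF}$ is preserved, and conclude $3$-transitivity from Lemma~\ref{prop:Stab-U}. Your write-up is in fact more complete than the paper's, which simply cites those lemmas and the definition of $\mathrm{MAut}$ without spelling out either the verification that $(\beta_1u+\beta_2)^{4(q+1)}\in\gf(q)^*$ or the contragredient passage to $G_{-1}\subseteq\mathrm{MAut}(\C_m^\perp)$; you supply both of these explicitly.
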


\begin{proof}
1) By the definitions in (\ref{cm}) and (\ref{eq:sf}),
$$\C_m=\left\{ (f(u))_{u \in U_{q+1}}:  f \in  \mathcal{SF} \right\}.$$
Then the desired conclusion follows from definitions and Lemmas \ref{lem:representation}, \ref{eq-maut} and \ref{prop:Stab-U}.

2) The desired conclusion follows from the first conclusion of this theorem.

\end{proof}


\section{Concluding remarks}\label{sec-summary}

In this paper, we investigated a class of cyclic codes $\C_m$ over $\gf(7^m)$ and completely determined their parameters. The results showed that the code $\C_m$ has four nonzero weights and supports $3$-designs. Meanwhile, the dual code of $\C_m$  also supports $3$-designs, and the automorphism group of the code $\C_m$ and its dual $\C_m ^\perp$  are $3$-transitive.
Specifically, the complements of the supports of the minimum weight codewords in $\C_m$ form a Steiner system $S(3,8, 7^m+1)$. Using the similar method of this paper and \cite{Liudingtang2021},
we remark that it may obtain some new cyclic codes admitting $3$-transitive automorphism groups  and determine their parameters by properly choosing the value of $q$ and the exponents of $u$ in (\ref{cm}).


\begin{thebibliography}{99}


\bibitem{Chien5}
R. T. Chien, ``Cyclic decoding procedure for the Bose-Chaudhuri-Hocquenghem codes'', \emph{IEEE Trans. Inform. Theory}, vol. 10, no. 4, pp. 357--363, 1964.

\bibitem{Forney12}
G. D. Forney, ``On decoding BCH codes,'' \emph{IEEE Trans. Inform. Theory,} vol. 11, no. 4, pp. 549--557, 1995.

\bibitem{Prange28}
E. Prange, ``Some cyclic error-correcting codes with simple decoding algorithms,'' Air Force Cambridge Research Center-TN-58-156, Cambridge, Mass., April 1958.

\bibitem{dinghell2013}
C. Ding, T. Helleseth, ``Optimal Ternary Cyclic Codes From Monomials,'' \emph{IEEE Trans. Inform. Theory},
vol. 59, no. 9, pp. 5898-5904, 2013.
\bibitem{zhouding2013}
Z. Zhou, C. Ding, ``Seven Classes of Three-Weight Cyclic Codes,'' \emph{IEEE Trans. Commun.}, vol. 61, no. 10, pp. 4120-4126, 2013.
\bibitem{Liding2013}
C. Li, C. Ding, S. Li, ``LCD Cyclic Codes Over Finite Fields,'' \emph{IEEE Trans. Inform. Theory}, vol. 63, no. 7, pp. 4344-4356, 2017.
\bibitem{LiuDing2017}
H. Liu, C. Ding, ``Infinite families of 2-designs from GA1(q) actions,'' ArXiv:1707.02003, 2017.
\bibitem{Ding2018}
C. Ding, ``A sequence construction of cyclic codes over finite fields,'' \emph{Cryptogr. Commun.}, vol. 10, no. 2, pp. 319-341, 2018.
\bibitem{Zha2021}
Z. Zha, L. Hu, Y. Liu, X. Cao,  ``Further results on optimal ternary cyclic codes,'' \emph{Finite Fields Their Appl.} , vol.75, pp. 101898, 2021.


\bibitem{Dingbook18}
C. Ding, \emph{Designs from Linear Codes}. Singapore: World Scientific, 2018.

\bibitem{Dingtang2019}
C. Ding, C. Tang, ``Infinite families of near MDS codes holding t-designs,'' \emph{IEEE Trans. Inform. Theory}, vol. 66, no. 9, pp. 5419--5428, 2020.

\bibitem{Dingtv2020}
C.Ding, C. Tang, V.D. Tonchev, ``Linear codes of 2-designs associated with subcodes of the ternary generalized Reed-Muller codes,'' \emph{Des. Codes Cryptogr.}, vol. 88, no. 4, pp. 625-641, 2020.
\bibitem{Dingt20201}
C. Ding, C. Tang, ``The linear codes of t-designs held in the Reed-Muller and Simplex codes,'' arXiv:2008.09935, 2020.



\bibitem{Ding18dcc}
C. Ding, ``Infinite families of 3-designs from a type of five-weight code," \emph{Des. Codes Cryptogr.}, vol. 86, no. 3, pp. 703--719, 2018.

\bibitem{DingLi16}
C. Ding, C. Li, ``Infinite families of $2$-designs and $3$-designs from linear codes,''
\emph{Discrete Math.}, vol. 340, no. 10, pp. 2415--2431, 2017.

\bibitem{Ding15}
C. Ding,  \emph{Codes from Difference Sets}. Singapore: World Scientific, 2015.
\bibitem{ton1}
V. D. Tonchev, “Codes and designs,” In Handbook of Coding Theory, vol. II, V. S. Pless and W. C. Huffman, eds., Elsevier, Amsterdam, pp. 1229–1268, 1998.
\bibitem{ton2}
V. D. Tonchev, “Codes,” In Handbook of Combinatorial Designs, 2nd edition, C. J. Colbourn and J. H. Dinitz, eds., CRC Press, New York, pp. 677–701, 2007.


\bibitem{Ding18jcd}
C. Ding, ``An infinite family of Steiner systems from cyclic codes," \emph{Journal of Combinatorial Designs}, vol. 26, pp. 127--144, 2018.



\bibitem{HP10}
W. C. Huffman, V. Pless,  \emph{Fundamentals of Error-Correcting Codes}. Cambridge: Cambridge University Press, 2003.

\bibitem{TDX2019}
C. Tang, C. Ding, M. Xiong, ``Steiner systems $S(2, 4, \frac{3^m-1}{2})$ and 2-designs from ternary linear codes of length $\frac{3^m-1}{2}$,'' \emph{Des. Codes Cryptogr.}, vol. 87, no. 12, pp. 2793--2811, 2019.

\bibitem{Tangit2019}
C. Tang, C. Ding, M. Xiong, ``Codes, differentially $\delta$-uniform functions, and $t$-designs," \emph{IEEE Trans. Inform. Theory},  vol. 66, no. 6, pp. 3691--3703, 2020.

\bibitem{Tangding2020}
 C. Tang, C. Ding, ``An infinite family of linear codes supporting $4$-designs,'' \emph{IEEE Trans. Inform. Theory}, vol. 67, no. 1, pp. 244-254, 2021.

\bibitem{du1}
 X. Du, R. Wang, C. Tang, Q. Wang,
``Infinite families of 2-designs from two classes of binary cyclic codes with three nonzeros'', arXiv:1903.08153, 2019.

\bibitem{du2}
 X. Du, R. Wang, C. Tang, Q. Wang, ``Infinite families of 2-designs from two classes of linear codes'', arXiv:1903.07459, 2019.

\bibitem{Yan01}
H. Yan,  ``A class of primitive BCH codes and their weight distribution.'' \emph{ Appl. Algebra Eng. Commun. Comput.} vol. 29, no. 1, pp. 1--11, 2018.

\bibitem{Yan02}
H. Yan, H. Liu, C. Li, et al., ``Parameters of LCD BCH codes with two lengths'', \emph{Adv. Math. Commun.}, vol. 12, no. 3, pp. 579-594, 2018.


\bibitem{Liudingtang2021}
Q. Liu, C. Ding, S. Mesnager, C. Tang, V.D. Tonchev, ``On infinite families of narrow-sense antiprimitive BCH codes admitting 3-transitive automorphism groups and their consequences'', arXiv:2109.09051, 2021.
\end{thebibliography}
\end{document}